\documentclass[twocolumn,superscriptaddress,tightenlines,aps,prx,floatfix]{revtex4-2}

\usepackage{graphicx}
\usepackage{amsmath}   
\usepackage{amsfonts}
\usepackage{physics}
\usepackage{amssymb}
\usepackage{booktabs}
\usepackage{microtype}
\usepackage{amsthm}

\usepackage{xcolor}
\definecolor{DarkGreen}{RGB}{1,80,1}
\definecolor{Glicine}{RGB}{180, 130, 200}

\newtheorem{theorem}{Theorem}
\newtheorem{definition}{Definition}

\newtheorem{lemma}{Lemma}
\usepackage[colorlinks=true, allcolors=blue]{hyperref}

\usepackage{cleveref}
\usepackage{appendix}
\usepackage{dsfont}
\usepackage{mathtools}
\usepackage{caption}
\usepackage{mathrsfs}
\usepackage{titlesec}
\usepackage{float}
\usepackage{bibunits}
\defaultbibliographystyle{apsrev4-2}

\begin{document}
\title{Harmonic morphisms and dynamical invariants in network renormalization}

\author{Francesco Maria Guadagnuolo}
\affiliation{EPFL, Lausanne, Switzerland}

\author{Marco Nurisso}
\affiliation{Dipartimento di Scienze Matematiche, Politecnico di Torino, Torino, Italy}

\author{Federica Galluzzi}
\affiliation{Dipartimento di Matematica, Università di Torino, Torino, Italy}

\author{Antoine Allard}
\affiliation{D\'epartement de physique, de g\'enie physique et d'optique, Universit\'e Laval, Qu\'ebec, Canada}
\affiliation{Centre interdisciplinaire en mod\'{e}lisation math\'{e}matique, Universit\'{e} Laval, Qu\'{e}bec, Canada}

\author{Giovanni Petri}
\affiliation{Network Science Institute, Northeastern University London, London, UK}

\begin{bibunit}

\begin{abstract}
Renormalization of complex networks requires principled criteria for assessing whether a coarse-graining preserves dynamical content.
We prove that discrete harmonic morphisms ---surjective maps preserving harmonic functions--- provide the minimal condition under which random walks on a fine-grained network project exactly onto random walks on its coarse-grained image, through an appropriate random time change.
We formalize this via the \emph{harmonic degree}, a diagnostic quantifying how closely any network coarse-graining approximates a harmonic morphism.
Applying this framework to geometric, Laplacian, and GNN-based renormalization across real-world networks, we find that each method produces a distinct dynamical fingerprint encoding its underlying physical assumptions.
Most strikingly, Laplacian renormalization spontaneously yields \emph{exact} harmonic morphisms in several networks, achieving exact preservation of first-exit random-walk transition structure at specific scales, a property that entropic susceptibility fails to detect.
Our results identify a discrete analog of diffusion-preserving conformal maps for irregular network topologies and provide quantitative tools for designing and evaluating multi-scale network descriptions.
\end{abstract}

\maketitle

%% ====================================================================
%%  INTRODUCTION
%% ====================================================================
\section{Introduction}

The renormalization group (RG) is a fundamental tool in statistical mechanics for studying how physical systems change with observation scale~\cite{Kadanoff_renorm,Wilson_renorm_critical_phenomena_Kondo}. For regular lattices, conformal invariance provides clear criteria for scale-invariant descriptions~\cite{Conformal_Invariance_Lattice_Models_Smirnov,Introduction_to_Conformal_Field_Theory}, and renormalization transformations preserving these symmetries maintain essential physics while changing scale.
Extending the RG from lattices to complex networks remains an open challenge. Numerous approaches have been proposed~\cite{Song_self_similarity_complex_nets_2005,Skeleton_and_Fractal_Scaling, Multiscale_unfolding_of_real_nets_geometric_renorm_2018, Garlaschelli_Multiscale_2023, Villegas_Laplacian_renorm_heterogeneous_2023, Coarse_graining_network_De_Domenico}, including treatments for higher-order structures~\cite{nurisso2025higher, Path_integral_higher_order_renorm, Battiston2021physics}. 
Each method focuses on specific network characteristics: geometric structure through hyperbolic embeddings~\cite{Multiscale_unfolding_of_real_nets_geometric_renorm_2018}, diffusive structure via spectral properties~\cite{Villegas_Laplacian_renorm_heterogeneous_2023}, latent probabilistic laws~\cite{Garlaschelli_Multiscale_2023}, or informational content~\cite{Coarse_graining_network_De_Domenico}. Nonetheless fundamental questions persist: how should we evaluate a renormalization procedure? How do we assess whether coarse-graining preserves dynamics, and not merely structure? Despite recent progress~\cite{chen2025preservingspreadingdynamicsinformation,yi2025equilibriumpreservinglaplacianrenormalizationgroup,Schmidt2025Spectral,KIM2025117398}, dynamical renormalization of complex networks has not been systematically investigated.

Our approach differs from prior work in that we focus on the coarse-graining transformation itself, treating it as a surjective function between graphs and searching for its symmetries and preserved quantities. Specifically, we leverage the notion of \emph{discrete harmonic morphism}, introduced by Urakawa~\cite{Urakawa_Harmonic_morphisms_2000}. We prove that harmonic morphisms ---mappings preserving locally harmonic functions--- send random walks on fine-grained networks to random walks on coarse-grained networks through appropriate random time changes: a harmonic morphism equates first-passage probabilities for random walkers exiting a group of nodes (named ``macro-sets'' below) with one-step transition probabilities between corresponding macro-nodes. 
This parallels the role of conformal maps in 2D field theory, which preserve Brownian motion up to time reparametrization, and generalizes this diffusion-preserving property to discrete, irregular topologies via the natural framework of harmonic morphisms.

Building on this theoretical foundation, we introduce the \emph{harmonic degree}, a diagnostic quantifying how much any coarse-graining preserves random walk dynamics. 
Applying this framework to three major renormalization methods across real-world networks, we discover that each approach produces a distinct \emph{dynamical fingerprint}---a characteristic harmonic degree curve revealing the underlying physics of how it aggregates structure. Remarkably, we identify networks where Laplacian renormalization spontaneously produces exact harmonic morphisms at specific scales, achieving exact first-exit random-walk preservation that entropic susceptibility~\cite{Villegas_Laplacian_renorm_heterogeneous_2023,nurisso2025higher} fails to detect.

The paper proceeds as follows. 
In Sec.~\ref{section:harmonic_morphisms}, we develop the theory of harmonic morphisms and prove our main theorem linking them to random walk preservation. 
In Sec.~\ref{section:harmonic_degree}, we define the harmonic degree metrics. 
In Sec.~\ref{section:RG}, we apply these tools to three major renormalization methods, revealing distinct dynamical fingerprints. 
In Sec.~\ref{section:laplacian_deep}, we examine Laplacian renormalization in depth, including the discovery of exact harmonic morphisms and the comparison with entropic susceptibility. 
In Sec.~\ref{section:higher-order}, we explore extensions to higher-order networks.
Sec.~\ref{section:discussion} discusses implications and open questions.

%% ====================================================================
%%  THEORETICAL FRAMEWORK
%% ====================================================================
\section{Theoretical Framework}
\label{section:harmonic_morphisms}

\textbf{\begin{figure*}[t]
    \centering
    \includegraphics[width=\linewidth]{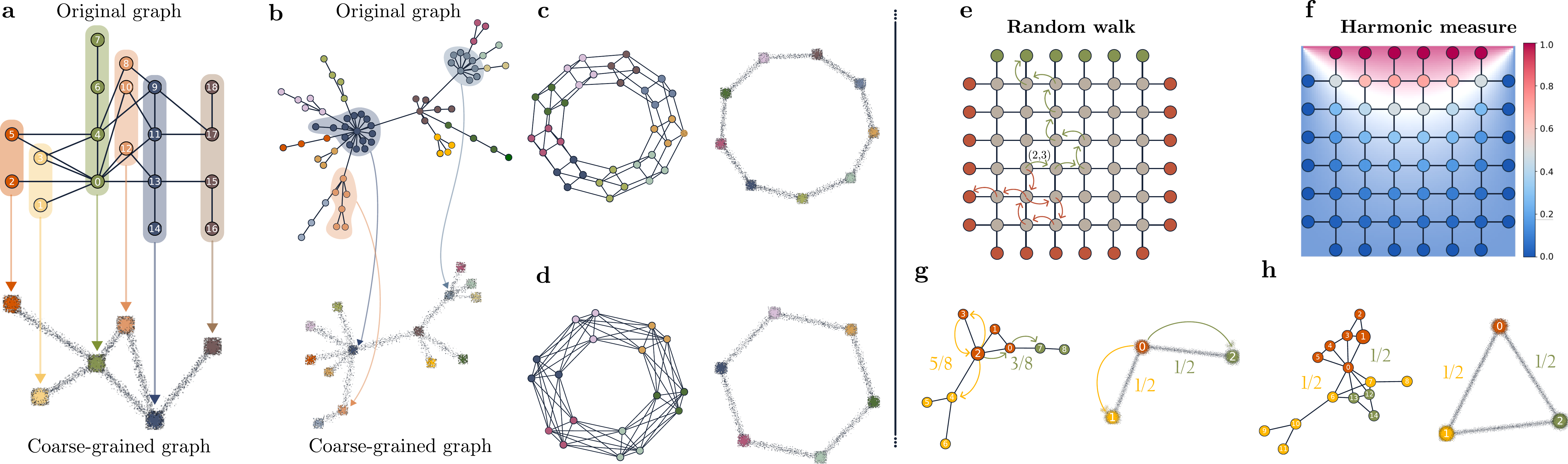}
    \caption{\textbf{a.}~The original example of a horizontally conformal function from %Ref.
    ~\cite{Urakawa_Harmonic_morphisms_2000}. Node~$0$ has the same number of neighbors in each adjacent macro-set: 2 yellows, 2 reds, 2 light browns, and 2 blues. Node~13 has exactly one neighbor in each adjacent macro-set. \textbf{b.}~Harmonic morphism on a tree-like graph via node collapse: peripheral nodes merge into their hub. \textbf{c.}~Harmonic morphism from a torus-like graph onto a cycle. \textbf{d.}~A combinatorial conformal map (Definition~\ref{def:combinatorial_conformal}) from a torus to a cycle: each pink node has 3 blue, 3 yellow, and 3 pink neighbors (counting itself), yielding constant multiplicities including the self-loop. \textbf{e.}~Grid graph $G$ with boundary $\partial G$ (red and green nodes); $B \subseteq \partial G$ consists of the green nodes. Example random walk starts from $(2,3)$.
    \textbf{f.}~Harmonic measure $\mathcal{H}(\cdot,B)$ solving the discrete Laplace equation; the continuous limit is shown in the background.
    \textbf{g.}~A coarse-graining that is \emph{not} a harmonic morphism: node~2 has one yellow neighbor but no green ones, so a walker starting from node~2 exits to the yellow macro-set with probability $5/8$ and to the green one with probability $3/8$, not the $1/2$ expected from the coarse-grained graph.
    \textbf{h.}~A coarse-graining that \emph{is} a harmonic morphism: to exit the red macro-set from node~1, the walker must pass through node~0, which has equal numbers of green and yellow neighbors, guaranteeing balanced exit probabilities.}
    \label{fig:figure_1}
\end{figure*}}

\subsection{Graph coarse-graining as a surjective function}

The starting point of our approach is to treat any coarse-graining as a surjective map $\varphi: V \rightarrow \mathcal{V}$ that compresses a graph $G=(V,E)$ into a coarser graph $\mathcal{G}=(\mathcal{V},\mathcal{E})$. We refer to each $y \in \mathcal{V}$ as a \emph{macro-node} and to its pre-image $\varphi^{-1}(y)$ as a \emph{macro-set}. Over each graph we consider the respective simple random walk. Our goal is to identify which properties $\varphi$ must satisfy so that the random walk on $G$, when observed only at the times it crosses from one macro-set to another, is faithfully represented by the random walk on $\mathcal{G}$. 
Figure~\ref{fig:figure_1} illustrates several examples of such maps (which will become clearer in the next section). 

Throughout, we focus on finite, undirected, unweighted graphs, though our results extend to locally finite graphs (i.e. finite degrees even when $V$ is infinite). The random walk Laplacian operator $\Delta$ acts on $f: V \rightarrow \mathbb{R}$ as
\begin{equation}
\Delta f(v) = f(v) - \frac{1}{\deg(v)} \sum_{w \sim v} f(w),
\end{equation}
where $w \sim v$ denotes adjacency. A function $f$ is \emph{harmonic at node $x$ }if $\Delta f(x) = 0$, meaning $f(x)$ equals the average of $f$ over the neighbors of $x$:
\begin{equation}
f(x) = \frac{1}{\deg(x)}\sum_{z \sim x}f(z).
\end{equation}
Harmonic functions are intimately tied to random walks: a harmonic function on a domain with prescribed boundary values gives the expected value of the boundary data under the first-exit distribution of the walk~\cite{Lawler1991,barahudcovaLatticeModels}. 
This connection is what makes the following notion dynamically meaningful.

\subsection{Harmonic morphisms and horizontal conformality}
Relying on Urakawa's work on discrete analogs of harmonic morphisms between Riemannian manifolds~\cite{Urakawa_Harmonic_morphisms_2000,Fuglede1978HarmonicMB,Ishihara1979AMO}, we say that a surjective map $\varphi: V \rightarrow \mathcal{V}$ is a \emph{harmonic morphism} if it preserves harmonicity locally: whenever a function on $\mathcal{G}$ is harmonic at a macro-node $y$, its pullback through $\varphi$ is harmonic at every node in the corresponding macro-set $\varphi^{-1}(y)$.

\begin{definition}[Harmonic morphism]
\label{def:harmonic_morphism}
Given two graphs $G=(V,E)$ and $\mathcal{G}=(\mathcal{V},\mathcal{E})$, a function $\varphi: V \rightarrow \mathcal{V}$ is a \emph{harmonic morphism} if: $\forall y \in \varphi(V)$, $\forall x \in V$ such that $\varphi(x)=y$, and for any function $f:\mathcal{V} \rightarrow \mathbb{R}$ that is harmonic at $y$, the composition $f \circ \varphi : V \rightarrow \mathbb{R}$ is harmonic at $x$.
\end{definition} 

This abstract condition admits a concrete, checkable characterization. Urakawa's main result~\cite{Urakawa_Harmonic_morphisms_2000} shows that surjective harmonic morphisms are equivalent to \emph{horizontally conformal} functions, maps satisfying two conditions: (i) adjacent nodes must map to adjacent or identical macro-nodes (a \emph{weak homomorphism}), and (ii) for every node $x$, the number of its neighbors falling in each adjacent macro-set must be the same across all such macro-sets.

\begin{definition}[Horizontally conformal function]
\label{def:horizontal_conformal}
Given graphs $G=(V,E)$ and $\mathcal{G}=(\mathcal{V},\mathcal{E})$ and a function $\varphi: V \rightarrow \mathcal{V}$, we say $\varphi$ is \emph{horizontally conformal} if:
\begin{enumerate}
\item[(i)] $\forall x,z \in V$: $x \sim z \implies \varphi(x) \sim \varphi(z) \text{ or } \varphi(x)=\varphi(z)$

\item[(ii)] $\forall y \in \mathcal{V}$, $\forall x \in \varphi^{-1}(y)$: the quantity $|\{ z\in \varphi^{-1}(y') : z \sim x \}|$ is constant for all $y' \sim y$.
\end{enumerate}
\end{definition}

\begin{theorem}[Urakawa~\cite{Urakawa_Harmonic_morphisms_2000}]
\label{thm:urakawa}
Let $G=(V,E)$ and $\mathcal{G}=(\mathcal{V},\mathcal{E})$ be two graphs and $\varphi: V \rightarrow \mathcal{V}$ a surjective function. Then $\varphi$ is a harmonic morphism if and only if $\varphi$ is horizontally conformal.
\end{theorem}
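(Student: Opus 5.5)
The plan is to reduce both directions to one explicit formula for the Laplacian of a pullback. Fix $y\in\mathcal{V}$ and $x\in\varphi^{-1}(y)$. For each $y'\in\mathcal{V}$ let
\[
m_x(y') = |\{ z\in \varphi^{-1}(y') : z\sim x\}|
\]
be the number of neighbours of $x$ in the macro-set of $y'$. Grouping the neighbours of $x$ by their image, for any $f:\mathcal{V}\to\mathbb{R}$ we get
\begin{equation*}
\deg(x)\,\Delta(f\circ\varphi)(x) = \deg(x) f(y) - m_x(y) f(y) - \sum_{y'\neq y} m_x(y') f(y').
\end{equation*}
The degree splits as $\deg(x) = m_x(y) + \sum_{y'\neq y} m_x(y')$. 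Harmonicity of $f$ at $y$ means $\deg(y) f(y) = \sum_{y'\sim y} f(y')$. Both directions will be read off by comparing these two linear relations in the values of $f$. I assume no isolated vertices so that $\Delta$ is defined; the graphs are simple, so $y\not\sim y$.

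\emph{Sufficiency} (horizontally conformal $\Rightarrow$ harmonic morphism). By (i), $m_x(y')=0$ unless $y'=y$ or $y'\sim y$. By (ii), there is a constant $\lambda_x\ge 0$ with $m_x(y')=\lambda_x$ for every $y'\sim y$. Hence $\deg(x) = m_x(y) + \lambda_x \deg(y)$, and the neighbour sum becomes $m_x(y) f(y) + \lambda_x\sum_{y'\sim y} f(y')$. If $f$ is harmonic at $y$, this equals $m_x(y) f(y) + \lambda_x \deg(y) f(y) = \deg(x) f(y)$. Thus $\Delta(f\circ\varphi)(x)=0$.

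\emph{Necessity} (harmonic morphism $\Rightarrow$ horizontally conformal). I will use indicator test functions, which are harmonic at $y$ whenever they vanish at $y$ and their sum over the neighbours of $y$ is zero.
\begin{itemize}
\item For (i): suppose $x\sim z$ with $w=\varphi(z)$ neither equal nor adjacent to $y$. Take $f=\delta_w$. It is harmonic at $y$, since $f(y)=0$ and $f$ vanishes on all neighbours of $y$. But at $x$ the average of $f\circ\varphi$ over neighbours is $m_x(w)/\deg(x)>0 = f(\varphi(x))$, so the pullback is not harmonic at $x$, a contradiction.
\item For (ii): take two neighbours $y'_1\neq y'_2$ of $y$ and set $f=\delta_{y'_1}-\delta_{y'_2}$. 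Then $f(y)=0$ and $\sum_{y'\sim y}f(y')=0$, so $f$ is harmonic at $y$. The formula above then gives $0=\deg(x)\Delta(f\circ\varphi)(x) = -(m_x(y'_1)-m_x(y'_2))$. Hence $m_x$ is constant on the neighbours of $y$.
\end{itemize}
Surjectivity is used only so that every $y$ has a nonempty fibre and the quantifiers in Definitions~\ref{def:harmonic_morphism} and \ref{def:horizontal_conformal} range over the same macro-nodes.

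The computations are routine. The step needing care is the bookkeeping of degenerate cases: neighbours of $x$ inside its own macro-set, which are what permit a weak rather than strict homomorphism; the case $\lambda_x=0$, where $x$ is an interior node whose whole neighbourhood lies in $\varphi^{-1}(y)$; and macro-nodes of degree zero or one, where (ii) is vacuous. I would check explicitly that the decomposition $\deg(x)=m_x(y)+\lambda_x\deg(y)$ remains valid in each of these cases.
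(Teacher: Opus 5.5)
Your proof is correct and complete. The paper gives no proof of this theorem: it cites Urakawa and uses the equivalence as a black box. The $(\Rightarrow)$ direction of Theorem~\ref{theo:RW_preservation} invokes condition~(ii) directly, and its $(\Leftarrow)$ direction stops at horizontal conformality. So your argument supplies a missing step rather than competing with an existing one.

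Your route is the standard direct one:
\begin{itemize}
\item the cleared-denominator identity for $\deg(x)\,\Delta(f\circ\varphi)(x)$;
\item sufficiency from the decomposition $\deg(x)=m_x(y)+\lambda_x\deg(y)$;
\item necessity from test functions that vanish at $y$ and sum to zero over the neighbours of $y$.
\end{itemize}
That last feature is what makes necessity work. Compare the paper's Lemma~\ref{lem:nonempty_boundary}: its test function $f(y)=1$, $f(y')=-1$, zero elsewhere, is not harmonic at $y$, since $\Delta f(y)=1+1/\deg(y)\neq 0$. With your (i) and (ii), that lemma follows immediately. If $\partial G_{yy'}=\emptyset$, then $m_x(y')=0$ on the whole fibre. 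By (ii) every $m_x(y'')$ with $y''\sim y$ vanishes, and by (i) the fibre has no outside neighbours.

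Your approach also gives a pointwise statement. For each $x$, pullbacks of all functions harmonic at $\varphi(x)$ are harmonic at $x$ if and only if the edges at $x$ satisfy (i) and $x$ satisfies (ii). This is exactly the justification for the per-node notion of ``harmonic node'' used in Sec.~\ref{section:harmonic_degree}.

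Three minor remarks:
\begin{itemize}
\item The degenerate cases you defer are fine. Under (i), each neighbour of $x$ lies either in $\varphi^{-1}(y)$ or in a fibre over some $y'\sim y$. Each of the latter fibres contributes exactly $\lambda_x$, so the decomposition holds identically; take $\lambda_x=0$ when $\deg(y)=0$.
\item If you state harmonicity in the cleared form $\deg(v)f(v)=\sum_{w\sim v}f(w)$ throughout, you can drop the no-isolated-vertices assumption altogether.
\item Surjectivity is not needed. Both definitions only constrain nodes $x$ lying in nonempty fibres, so the equivalence holds for every map $\varphi$, not just surjective ones.
\end{itemize}
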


Condition~(ii) is the key symmetry. It requires that macro-sets be connected to one another in a balanced way, a \emph{mixing symmetry}. To see what this means concretely, consider the Urakawa example in Fig.~\ref{fig:figure_1}(a): node~0 has exactly 2 neighbors in each of the four adjacent macro-sets (yellow, red, light brown, blue), while node~13 has exactly 1 neighbor in each. These constant multiplicities ensure horizontal conformality and thus, by Theorem~\ref{thm:urakawa}, that $\varphi$ is a harmonic morphism. Other natural examples include graph collapses [Fig.~\ref{fig:figure_1}(b)], where peripheral nodes connected only through a hub are merged into it---a structure that arises naturally in tree-like networks---and torus-to-cycle projections [Fig.~\ref{fig:figure_1}(c)].

\subsection{Harmonic morphisms preserve random walk dynamics}

We now turn to the dynamical content of harmonic morphisms. 
Urakawa himself claimed that discrete harmonic morphisms project random walks on a graph to another by a suitable time change (~\cite{Urakawa_Harmonic_morphisms_2000}, Remark $2.3$), without formalizing or demonstrating this statement. 
Here we rigorously state this proposition and provide a formal proof of it.
Our key tool is the \emph{harmonic measure} from discrete potential theory~\cite{barahudcovaLatticeModels,Lawler1991}: given a graph $G$ with boundary $\partial G$, the harmonic measure $\mathcal{H}(x, B) = \mathbb{P}(X^x_\tau \in B)$ is the probability that a random walk starting at $x$ exits $G$ through the boundary subset $B \subseteq \partial G$ at its first exit time $\tau$. Crucially, $\mathcal{H}(\cdot, B)$ is the unique harmonic function on $G$ with boundary values 1 on $B$ and 0 on $\partial G \setminus B$~\cite{barahudcovaLatticeModels}. Figures~\ref{fig:figure_1}(e,f) illustrate this on a grid graph: the harmonic measure, visualized as a heatmap, solves the discrete Laplace equation and converges to its continuous counterpart in the scaling limit.

\begin{definition}[Harmonic measure]
\label{def:harmonic_measure}
Let $G=(V,E)$ be a graph with boundary $\partial G$, where $G$ is seen as a subgraph of $G \sqcup \partial G$. Let $\{X^x_n\}$ be a simple random walk starting at $x \in G$, and let $B \subseteq \partial G$. The \emph{harmonic measure} of $x$ with respect to $B$ is
$\mathcal{H}(x,B) = \mathbb{P}(X^x_\tau \in B)$,
where $\tau = \inf\{ n > 0 : X^x_n \in \partial G \}$ is the first exit time from $G$.
\end{definition}
To connect this to coarse-graining, consider a surjective $\varphi: V \rightarrow \mathcal{V}$ inducing a partition of $V$ into macro-sets. For each macro-node $y \in \mathcal{V}$, let $G_y$ be the induced subgraph on $\varphi^{-1}(y)$, let $\partial G_y$ be the set of nodes outside $\varphi^{-1}(y)$ adjacent to it, and let $\partial G_{yy'} = \partial G_y \cap \varphi^{-1}(y')$ partition this boundary by macro-set, where $y'\in\mathcal{V}\backslash y$. We write $\mathcal{H}_y(x, y') := \mathcal{H}_{G_y}(x, \partial G_{yy'})$ for the probability that a walker starting at $x \in \varphi^{-1}(y)$ first exits macro-set $y$ by entering macro-set $y'$.

With this notation, we can state our main result, which gives harmonic morphisms a precise dynamical meaning: they are exactly the coarse-grainings under which first-passage probabilities across macro-sets match one-step transition probabilities in the coarse-grained graph.

\begin{theorem}[Random walk preservation]
\label{theo:RW_preservation}
Let $G=(V,E)$ and $\mathcal{G}=(\mathcal{V},\mathcal{E})$ be two graphs, and let $\varphi: V \rightarrow \mathcal{V}$ be a surjective function. 
Let $Y$ denote the simple random walk on $\mathcal{G}$. 
Then $\varphi$ is a harmonic morphism if and only if
\begin{eqnarray}
\forall y \in \mathcal{V}, \, \forall x \in \varphi^{-1}(y), \, \forall y' \sim y: \quad \nonumber \\ 
\mathcal{H}_y(x, y') = \mathbb{P}(Y_{t=1} = y' \mid Y_{t=0} = y) = \frac{1}{\deg(y)}.
\end{eqnarray}
\end{theorem}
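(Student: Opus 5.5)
By Theorem~\ref{thm:urakawa}, it suffices to show that the hitting condition in the statement is equivalent to horizontal conformality (Definition~\ref{def:horizontal_conformal}). I work throughout with the one-step (first-step) decomposition of the harmonic measure. Fix $y$ and $x\in\varphi^{-1}(y)$. Write $n_{y''}(x)=|\{z\in\varphi^{-1}(y''):z\sim x\}|$ for any macro-node $y''$, including $y''=y$. Conditioning on the first step of the walk gives
\begin{equation}
\mathcal{H}_y(x,y')=\frac{1}{\deg(x)}\Big(n_{y'}(x)+\sum_{z\sim x,\ \varphi(z)=y}\mathcal{H}_y(z,y')\Big).
\end{equation}
This says that $\mathcal{H}_y(\cdot,y')$ is the unique harmonic function on $G_y$ with boundary values $1$ on $\partial G_{yy'}$ and $0$ on the rest of $\partial G_y$.

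\emph{Direction ($\Rightarrow$).} Assume $\varphi$ is horizontally conformal.
\begin{enumerate}
\item By (i), every neighbor of $x$ outside $\varphi^{-1}(y)$ lies in some macro-set $\varphi^{-1}(y')$ with $y'\sim y$.
\item By (ii), each such $y'$ receives the same number $m_x$ of those neighbors. Hence, from any interior node $z$, the probability of leaving in one step into $y'$ is $m_z/\deg(z)$, the same for every $y'\sim y$.
\item Decompose the first-exit event according to the last node $z\in\varphi^{-1}(y)$ visited before exiting. Then
\begin{equation}
\mathcal{H}_y(x,y')=\sum_{z} G_y(x,z)\,\frac{m_z}{\deg(z)},
\end{equation}
where $G_y(x,z)$ is the Green's function (expected number of visits to $z$ before exit). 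This expression does not depend on $y'$.
\item The exit time is almost surely finite. This holds because $G$ is finite and connected and $\mathcal{V}$ has more than one macro-node, so every macro-set has an exit edge and the walk killed at $\partial G_y$ exits almost surely.
\item Summing over the $\deg(y)$ neighbors $y'\sim y$ therefore gives $1$, and each term equals $1/\deg(y)$.
\end{enumerate}
Alternatively, one can check that $\mathcal{H}_y(\cdot,y')-\mathcal{H}_y(\cdot,y'')$ is harmonic on $G_y$ with zero boundary data, and conclude by uniqueness.

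\emph{Direction ($\Leftarrow$).} Assume $\mathcal{H}_y(x,y')=1/\deg(y)$ for all $x\in\varphi^{-1}(y)$ and all $y'\sim y$.
\begin{enumerate}
\item \emph{Condition (i).} Summing over $y'\sim y$ shows that the walk exits into adjacent macro-sets with probability $1$. Suppose $x$ had a neighbor in some $y''\neq y$ with $y''\not\sim y$. The walk would then enter $y''$ in a single step with probability at least $1/\deg(x)>0$, a contradiction. So (i) holds.
\item \emph{Condition (ii).} Substitute the constant value $1/\deg(y)$ into the first-step identity above. This yields
\begin{equation}
\frac{\deg(x)}{\deg(y)}=n_{y'}(x)+\frac{n_y(x)}{\deg(y)},
\end{equation}
so $n_{y'}(x)=(\deg(x)-n_y(x))/\deg(y)$. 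This is independent of $y'$, which is exactly (ii).
\end{enumerate}
Theorem~\ref{thm:urakawa} then finishes the proof.

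The step I expect to be the main obstacle is not the algebra but the degenerate cases. The forward direction needs almost-sure exit from each macro-set, which requires connectivity of $G$ and $|\mathcal{V}|\ge 2$. I would also need to check that adjacency in $\mathcal{G}$ is consistent with adjacency realized by edges of $G$. If some $y'\sim y$ in $\mathcal{G}$ had no edge from $\varphi^{-1}(y)$, then (ii) would force $m_x=0$ for all $x$, and the walk would never exit. I would handle this by stating these standing assumptions (connected $G$, nontrivial partition, $\mathcal{E}$ induced by $\varphi$) explicitly at the start of the proof.
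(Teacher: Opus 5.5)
Your proof is correct and is essentially the paper's argument. For ($\Rightarrow$) you use a last-exit decomposition combined with horizontal conformality; your Green's-function form, which shows independence of $y'$ and then normalizes, is a repackaging of the paper's conditioning on $X^x_{\tau-1}$. For ($\Leftarrow$) you use a total-probability argument to force the weak-homomorphism condition, then the first-step identity giving $n_{y'}(x)=(\deg(x)-n_y(x))/\deg(y)$.

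Your standing assumption that $G$ is connected with $|\mathcal{V}|\ge 2$ is genuinely needed. The paper's claim that $\tau<\infty$ almost surely on any finite graph fails if a component of $G$ lies inside one macro-set, and in that case the forward direction is false.

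The one slip is in your alternative remark: $\mathcal{H}_y(\cdot,y')-\mathcal{H}_y(\cdot,y'')$ does not have zero boundary data. What is true is that its boundary contributions cancel at every interior node by (ii). So its restriction to $G_y$ solves the zero-data Dirichlet problem and therefore vanishes.
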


\begin{proof}[Proof sketch]
For the full proof, see the Supplementary Material \ref{sec:proofs_sm}.

($\Rightarrow$) Assume $\varphi$ is a harmonic morphism. 
Conditioning on the last step before exit, horizontal conformality (condition~ii of Definition~\ref{def:horizontal_conformal}) ensures the probability of exiting to $\partial G_{yy'}$ equals $1/\deg(y)$, independent of the exit node.

($\Leftarrow$) Suppose the harmonic measure condition holds. First, $\varphi$ must be a weak homomorphism (otherwise harmonic measures would sum to more than~1). Then, using that $\mathcal{H}_y(\cdot, y')$ is harmonic on $G_y$, the hypothesis $\mathcal{H}_y(x,y') = 1/\deg(y)$ for all $y' \sim y$ implies all multiplicities $k_{y'}$ are equal, yielding horizontal conformality.
\end{proof}

The difference between a harmonic and a non-harmonic coarse-graining is illustrated in Fig.~\ref{fig:figure_1}(g,h). 
In panel~(g), node~2 has one yellow neighbor but no green ones; a walker starting from node~2 therefore exits to the yellow macro-set with probability $5/8$, not the $1/2$ predicted by the coarse-grained graph. 
In panel~(h), the coarse-graining is a harmonic morphism: any walker exiting the red macro-set must pass through node~0, which has equal numbers of green and yellow neighbors, guaranteeing exit probabilities of exactly $1/2$ to each.

Theorem~\ref{theo:RW_preservation} has a natural physical interpretation. 
All dynamics \emph{within} macro-sets are disregarded: we track only when the walker crosses from one macro-set to another, and which one it enters. 
These crossings occur after a random time $\tau(x, \varphi)$ that depends on both the starting node and its macro-set, so $\varphi$ simultaneously performs a spatial scale change (compressing the network) and a temporal scale change (through $\tau$). 
This temporal flexibility is what distinguishes our framework from approaches to coarse-graining Markov chains that track internal macro-state activity at every time step~\cite{Faccin_2017,Consensus_coarse_graining_Barahona,rosas2024softwarenaturalworldcomputational}. 
The node collapse of Fig.~\ref{fig:figure_1}(b) is an instructive example: peripheral nodes may take many steps to reach the hub, while already-central nodes transition in one step, yet the harmonic morphism treats both on equal footing by focusing solely on first-exit probabilities. 
%More broadly, a walker might traverse a dense cluster in expected time $O(\log n)$ while crossing between well-separated macro-sets requires $O(n)$ steps, yet harmonic morphisms accommodate both time scales seamlessly.

\subsection{Combinatorial conformality and lazy random walks}
Horizontal conformality constrains how macro-sets connect to \emph{other} macro-sets (the ``horizontal'' structure). 
Inspired by conformal symmetry in physics~\cite{Introduction_to_Conformal_Field_Theory,Conformal_Invariance_Lattice_Models_Smirnov}, we can impose a stricter condition that also constrains connections \emph{within} macro-sets (the ``vertical'' structure). 
The resulting notion, which we call \emph{combinatorial conformality}, additionally requires the number of neighbors a node has in its own macro-set (counting itself) to participate in the same constant-multiplicity condition.

\begin{definition}[Combinatorial conformal function]
\label{def:combinatorial_conformal}
A function $\varphi: V \rightarrow \mathcal{V}$ is \emph{combinatorial conformal} if it satisfies condition~(i) of Definition~\ref{def:horizontal_conformal} and, additionally, $|\{ z\in \varphi^{-1}(y') : z \sim x \text{ or } z = x \}|$ is constant for all $y'$ with $y' \sim y$ or $y' = y$.
\end{definition}

Figure~\ref{fig:figure_1}(d) shows a concrete example: a torus mapped to a cycle where every pink node has 3~blue neighbors, 3~yellow neighbors, and 2~pink neighbors plus itself (3~total counting the self-loop)---the same multiplicity in every direction, including within its own macro-set. 
This additional constraint relates to lazy random walks, where at each step the walker either moves to a neighbor or stays put with uniform probability $1/(\deg(v)+1)$.

\begin{theorem}[Lazy random walk preservation]
\label{thm:lazy_rw}
A function $\varphi$ is combinatorial conformal if and only if it preserves lazy random walk transition probabilities: 
$\mathbb{P}(\mathcal{L}_{t=1} \in \varphi^{-1}(y') \mid \mathcal{L}_{t=0} = x) = 1/(\deg(y)+1)$ for all $y'$ with $y' \sim y$ or $y'=y$.
\end{theorem}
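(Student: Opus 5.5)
The argument is a direct count of the one-step lazy transition probabilities, done separately for each direction. Fix $y\in\mathcal{V}$ and $x\in\varphi^{-1}(y)$. Write $k_{y'}(x)=|\{z\in\varphi^{-1}(y'):z\sim x \text{ or } z=x\}|$. For $y'\neq y$ this is the ordinary neighbor count, and $k_y(x)$ counts the neighbors of $x$ inside its own macro-set plus one for $x$ itself. The lazy walk from $x$ moves to each element of $\{x\}\cup\{z: z\sim x\}$ with probability $1/(\deg(x)+1)$, so
\begin{equation}
\mathbb{P}(\mathcal{L}_{t=1}\in\varphi^{-1}(y')\mid \mathcal{L}_{t=0}=x)=\frac{k_{y'}(x)}{\deg(x)+1}.
\end{equation}
Summing over all $y'\in\mathcal{V}$ gives $\sum_{y'}k_{y'}(x)=\deg(x)+1$. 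Both directions reduce to this identity.

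($\Rightarrow$) Assume $\varphi$ is combinatorial conformal. By the constant-multiplicity condition there is a number $k(x)\ge 1$ with $k_{y'}(x)=k(x)$ for every $y'\in\{y\}\cup\{y':y'\sim y\}$; note $k_y(x)\ge1$ because $x$ counts itself. By condition~(i), every neighbor of $x$ lies in $\varphi^{-1}(y)$ or in some $\varphi^{-1}(y')$ with $y'\sim y$, so $k_{y'}(x)=0$ for every other $y'$. Therefore
\begin{equation}
\deg(x)+1=\sum_{y'} k_{y'}(x)=(\deg(y)+1)\,k(x),
\end{equation}
and the transition probability into each admissible $y'$ is $k(x)/(\deg(x)+1)=1/(\deg(y)+1)$. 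This is the same computation as in the forward direction of Theorem~\ref{theo:RW_preservation}, but without any harmonic-measure argument, since here only a single step is involved.

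($\Leftarrow$) Assume the probability equals $1/(\deg(y)+1)$ for all $y'$ with $y'\sim y$ or $y'=y$. Then
\begin{equation}
k_{y'}(x)=\frac{\deg(x)+1}{\deg(y)+1}
\end{equation}
for each of these $\deg(y)+1$ macro-nodes, so the multiplicities are constant across them, which is the second requirement of Definition~\ref{def:combinatorial_conformal}. It remains to recover condition~(i), and I expect this to be the only delicate point. Summing the hypothesis over the $\deg(y)+1$ admissible macro-nodes gives total probability $1$. Hence the lazy walk from $x$ lands outside $\bigcup_{y'\sim y \text{ or } y'=y}\varphi^{-1}(y')$ with probability $0$. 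Every neighbor of $x$ carries positive lazy probability $1/(\deg(x)+1)$, so no neighbor of $x$ maps to a macro-node that is neither equal nor adjacent to $y$. Since $x$ was arbitrary, condition~(i) holds. This mirrors the weak-homomorphism step of Theorem~\ref{theo:RW_preservation}, where harmonic measures summing to more than one is ruled out.

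Two conventions need to be fixed. The graph $\mathcal{G}$ is simple, so $\deg(y)$ counts distinct adjacent macro-nodes. The statement is read as holding for every $x$ in every macro-set $\varphi^{-1}(y)$, with $\varphi$ surjective so that each $\varphi^{-1}(y)$ is nonempty.
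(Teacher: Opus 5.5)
Your proof is correct and follows essentially the same route as the paper. In the forward direction you use condition~(i) so that the admissible pre-images partition $\mathcal{N}(x)\cup\{x\}$ and the equal counts sum to $\deg(x)+1$; in the backward direction you recover the weak-homomorphism condition from the hypothesized probabilities already summing to one, and you read off equal multiplicities from $k_{y'}(x)/(\deg(x)+1)=1/(\deg(y)+1)$.
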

For the full proof see Supplementary Material Section \ref{sec:proofs_sm}. 
Unlike harmonic morphisms, combinatorial conformal transformations involve \emph{no} temporal scale change: the discrete time steps of the lazy walk align between original and coarse-grained networks, making them strongly lumpable in the sense of Ref.~\cite{rosas2024softwarenaturalworldcomputational}. 
This is a much more restrictive condition---non-trivial collapses are never combinatorial conformal, and tree-like structures admit no non-trivial examples with connected pre-images---but when it holds, it provides a particularly clean form of dynamical equivalence. 
We expand on the connections between our framework and conformal field theory in the Supplementary Material~\ref{si:link-to-conformal-field-theory}.

\begin{figure*}
    \centering
    \includegraphics[width=\linewidth]{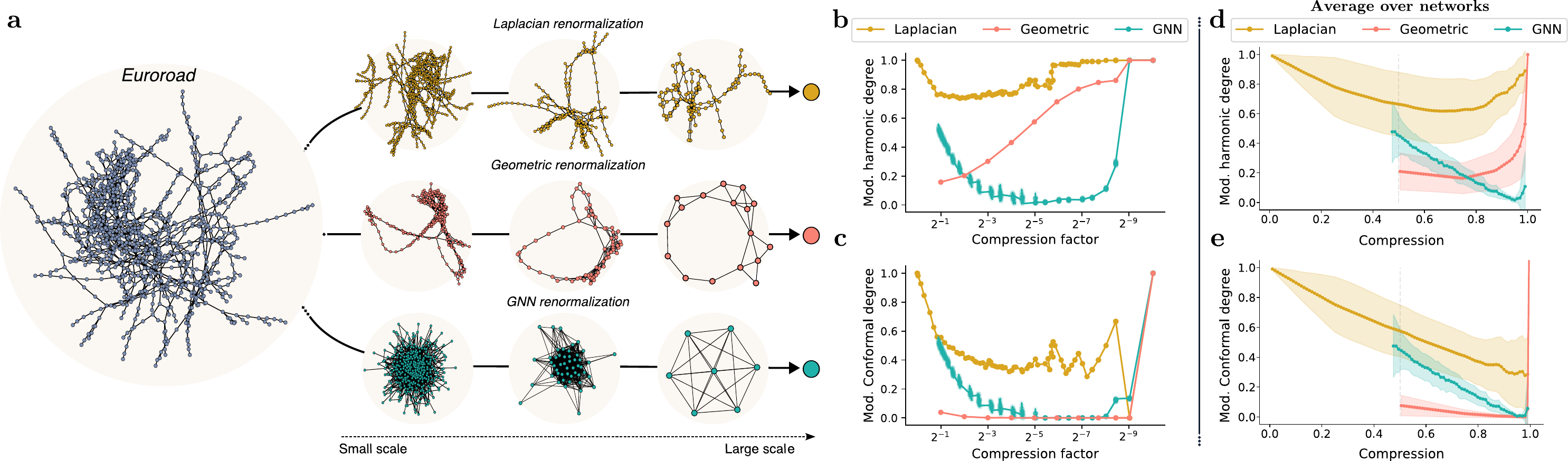}
    \caption{\textbf{Example of renormalizations of the Euro-Road network.} \textbf{a.}~Instances from the RG flow under geometric, Laplacian, and GNN-based renormalization methods. 
    \textbf{b,c.}~Modified harmonic degree (b) and modified conformal degree (c) as a function of network compression for the Euro-Road network.
    \textbf{d,e.}~Modified harmonic degree (d) and conformal degree (e) as a function of compression, averaged over 16 networks, for the three renormalization methods. 
    The distinct fingerprints observed on the Euro-Road network generalize robustly: geometric renormalization (S-curve), Laplacian (high-low-high), GNN (uniformly low, with a drop from moderate values to low ones).
    The compression is shown in logarithmic scale.
    }
    \label{fig:euroroad_comparison}
\end{figure*}

%% ====================================================================
%%  HARMONIC DEGREE
%% ====================================================================
\section{Harmonic degree metrics}
\label{section:harmonic_degree}

Given an arbitrary coarse-graining $\varphi: V \rightarrow \mathcal{V}$, we quantify how close it is to a harmonic morphism via the computationally tractable characterization of horizontal conformality. 
For each node $x$ with $\varphi(x)=y$, we compute
\begin{equation}
k_{y'}(x) := |\{ z \in \varphi^{-1}(y') : z \sim x \}|
\label{eq:multiplicity}
\end{equation}
for each $y' \sim y$. If all $k_{y'}(x)$ are equal, $x$ is a \emph{harmonic node}. Given a set $A \subseteq V$, we call $H_{\varphi}(A)$ its subset of harmonic nodes with respect to $\varphi$, we simply indicate it as $H(A)$ when $\varphi$ is clear from the context. 
We introduce three complementary metrics.
\begin{enumerate}
\item  \textbf{Mean harmonic degree:} defined as 
\begin{equation}
H_{\text{mean}} = \frac{|H(V)|}{|V|}
\end{equation}
corresponding to the fraction of harmonic nodes.

\item \textbf{Modified harmonic degree:} To prevent large macro-sets from dominating, we average harmonicity within each macro-set:
\begin{equation}
H_{\text{mod}} = \frac{1}{|\mathcal{V}|}\sum_{y \in \mathcal{V}} \frac{|H(\varphi^{-1}(y))|}{|\varphi^{-1}(y)|}.
\end{equation}
This is our primary metric for renormalization analysis.

\item \textbf{Harmonic deviation:} which provides a continuous measure of multiplicity imbalance across macro-sets:
\begin{equation}
H_{\text{Dev}} = \frac{1}{|V|}\sum_{x \in V} \mathrm{std}\!\left(\{k_{y'}(x) : y' \sim \varphi(x)\}\right).
\end{equation}
\end{enumerate}

We define analogous conformal degree metrics ($CF_{\text{mean}}, CF_{\text{mod}}, CF_{\text{Dev}}$) based on Definition~\ref{def:combinatorial_conformal}; explicit definitions appear in the Supplementary Material~\ref{sec:conformal_metrics_sm} and ~\ref{sec:conformal_metrics_sm}). 
% In practice, combinatorial conformality is extremely restrictive: $CF_{\text{mod}} = 0$ in approximately 85\% of network--method combinations (see Supplementary Material~\ref{sec:conformal_metrics_sm}). 
% The rare exceptions, primarily singleton clusters in Infomap partitions, indicate nodes isolated from diffusion-driven aggregation.

In the following, we define the \textit{compression} $\eta$ of a coarse-graining as
\begin{equation}
\eta(\varphi) = 1 - \frac{|\mathcal{V}|}{|V|},
\end{equation}
that is, $1$ minus the ratio between the number of nodes of the coarse-grained ($|\mathcal{V}|$) and original networks ($|V|$).

As a preliminary validation before turning to our main results on renormalization flows, in Supplementary Material~\ref{sec:clustering_sm} we benchmark these metrics against five community detection algorithms on numerous real networks. 
Harmonic degree varies across both networks and methods, with no algorithm universally maximizing it, confirming that it captures complementary meso-scale information not reducible to standard clustering quality.

% Before moving to the main contribution of this work, we explore these metrics by benchmarking them against five community detection algorithms on approximately 50 real networks from diverse domains (social, infrastructure, biological, tech/web, animal, and collaboration). 
% The results, presented in the Supplementary Material~\ref{sec:clustering_sm}, reveal that harmonic degree is both network and clustering method dependent (no algorithm universally maximizes it)
% %, is distinct from modularity (high modularity does not guarantee high harmonicity),
% and captures complementary meso-scale information about network organization.

%In the following, we define the \textit{compression} $\eta$ of a coarse-graining as
%\begin{equation}
%\eta(\varphi) = 1 - \frac{|\mathcal{V}|}{|V|},
%\end{equation}
%that is, $1$ minus the ratio between the number of nodes of the coarse-grained ($|\mathcal{V}|$) and original networks ($|V|$).

%% ====================================================================
%%  RESULTS: RENORMALIZATION FINGERPRINTS
%% ====================================================================
\section{Dynamical fingerprints of renormalization}
\label{section:RG}

Having established the theoretical framework and its diagnostic metrics, we now move to the central contribution of this work, by assessing to what degree existing network renormalization schemes align with random walk dynamics across scales. We focus on three major approaches, each built on different physical assumptions about what coarse-graining should preserve.

\emph{Geometric renormalization}~\cite{Multiscale_unfolding_of_real_nets_geometric_renorm_2018} embeds a network into hyperbolic space (via the Mercator algorithm~\cite{Mercator, D_Mercator}) then iteratively pairs nodes by angular proximity. 
At each iteration, roughly half the nodes are merged, exploiting the observation that many real networks exhibit latent hyperbolic geometry~\cite{Hyperbolic_geometry_complex_networks,Self_similarity_brain,Barjuan2025Multiscale}. 
The method is inherently top-down: a single global embedding guides all merging decisions at every scale.

\emph{Laplacian renormalization}~\cite{Villegas_Laplacian_renorm_heterogeneous_2023,nurisso2025higher} takes the opposite approach: merging decisions are driven by local diffusion. 
Given the graph Laplacian $L$, one computes the density matrix $\rho(t) = e^{-tL}/\mathrm{Tr}(e^{-tL})$ and merges nodes $i$ and $j$ when the information they exchanged at time $t$ exceeds a threshold: $\rho_{ij}(t) \geq \min(\rho_{ii}(t), \rho_{jj}(t))$, where the diffusion time $t$ acts as a continuous scale parameter. 
The method is inherently bottom-up: local diffusion properties alone determine which nodes are grouped. 
In practice, for numerical stability we consider the merging threshold as $\rho_{ij}(t) \geq \min(\rho_{ii}(t), \rho_{jj}(t))-\varepsilon$, where $\varepsilon$ is a control parameter usually chosen to be $10^{-3}, 10^{-4} \text{ or }0$ (see  Supplementary Material~\ref{sec:computational_sm} for implementation details, and Supplementary Material~\ref{sec:eq_laplacian_sm} for an extension to a recent variant, the Equilibrium Laplacian Renormalization~\cite{yi2025equilibriumpreservinglaplacianrenormalizationgroup}).

\emph{GNN-based renormalization}~\cite{Coarse_graining_network_De_Domenico} trains a graph neural network to produce coarse-grainings that preserve the heat-trace partition function $Z(t) = \mathrm{Tr}(e^{-tL})$, optimizing an objective based on $Z(t)/N \approx Z'(t)/N'$ for all $t \geq 0$. 
This optimizes for eigenvalue spectrum preservation rather than local diffusion or geometric proximity. Since the GNN produces soft assignments, for each network we sample 25 hard partitions and report statistics. Following the original paper and repository, we considered $2$ layers, with hidden dimension$= 64$.

\begin{figure*}
    \centering
    \includegraphics[width=\linewidth]{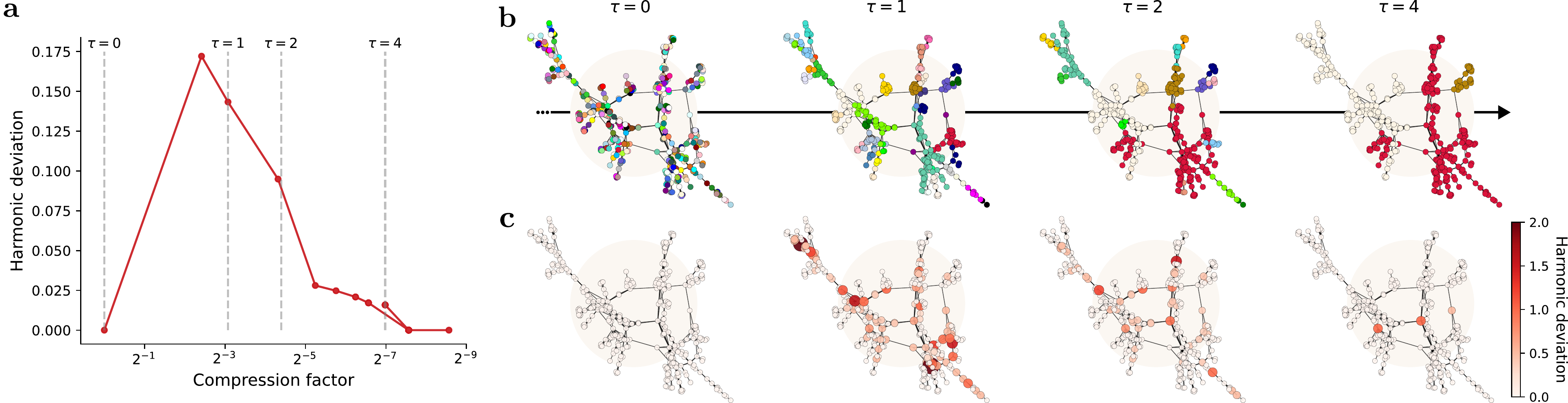}
    \caption{\textbf{Spatial distribution of harmonic deviation under Laplacian renormalization of the NetSci collaboration network.}
    \textbf{a.}~Harmonic deviation curve through Laplacian renormalization.
    \textbf{b.}~Community assignments at four diffusion times $t$; colors denote macro-set membership, shaded regions highlight emerging clusters.
    \textbf{c.}~Per-node harmonic deviation $\sigma_x$ on the same layouts. White nodes ($\sigma_x \approx 0$) satisfy horizontal conformality; dark red nodes ($\sigma_x \geq 2$) have strongly imbalanced boundary degrees.
    At small $t$, nearly all nodes are harmonic. At intermediate $t$, high-deviation nodes concentrate at the boundaries between merging communities. At large $t$, deviation returns to low values as the network coalesces into few, well-separated macro-sets.
    }
    \label{fig:hdev_viz}
\end{figure*}

In Figure~\ref{fig:euroroad_comparison} we report the results of applying these three methods to the Euro-Road network. 
We track the modified harmonic and conformal degrees across scales, and find that the dynamical fingerprints are strikingly distinct. 
Geometric renormalization produces a characteristic \textit{S-shaped curve}: $H_{\text{mod}} \approx 0.20$ at early iterations (the first and the second, $\eta = 0.5 \text{ and } \eta = 0.75$), rising above $H_{\text{mod}} = 0.8$ at late iterations (from iteration $7$, $\eta= 0.99$). 
The globally imposed embedding conflicts with local road connectivity at small scales, as angularly close nodes may not mix symmetrically with their local neighborhoods, but successfully captures the coarse geographic organization at large scales, where regional clusters interconnect more symmetrically. 
Conformality remains close to null ($CF_{\text{mod}} \approx 0$) throughout.

Laplacian renormalization produces instead a markedly different \textit{high-low-high pattern}: 
$H_{\text{mod}} = 0.83$ at compression $\eta= 0.33$ ($t=1.2$), a shallow minimum at $H_{\text{mod}} = 0.73$ at compression $\eta=0.84$ ($t \simeq 4$), and then recovering high values of harmonicity ($H_{\text{mod}} = 0.77$) for high compression $\eta\geq0.9$ ($t = 5.5$). 
At small compression $\eta$, diffusion is localized and nodes merge only with strongly-connected neighbors, forming micro-clusters that interface symmetrically with the rest of the network. 
At intermediate $\eta$, diffusion spreads unevenly, some regions consolidate faster than others, creating asymmetric merging. 
At large $\eta$, the network coalesces into a few well-separated macro-sets and harmonicity recovers naturally.
The conformal degree follows a similar pattern despite having overall lower values ($\sim 0.4$--$0.5$).  

Finally, GNN-based renormalization produces uniformly low harmonicity: $H_{\text{mod}}$ drops from $0.5$ to almost $0$ as compression increases. 
The method groups nodes by structural role regardless of whether they form assortative macro-sets, producing asymmetric multiplicities. 
Notably, $H_{\text{mod}} \approx CF_{\text{mod}}$ throughout, indicating that what little harmonicity exists is driven by singleton identity mappings rather than genuine balanced mixing ---in sharp contrast with Laplacian renormalization, where $H_{\text{mod}} \gg CF_{\text{mod}}$.

Crucially, these fingerprints are not specific to the Euro-Road network, but rather broadly shared across networks. 
Figure~\ref{fig:euroroad_comparison}(d) shows $H_{\text{mod}}$ averaged over 16 networks: the S-shaped geometric curve, the high-low-high Laplacian pattern, and the uniformly low GNN signature are all clearly reproduced across networks, showing that --despite variability at the level of individual networks-- each renormalization scheme produces a different effect on the coarse-grained dynamics, as expected by their different underlying physical assumptions.
On average, we find that Laplacian renormalization is the method that best respects the diffusion structure of networks across all scales, in terms of both harmonic and conformal metrics. 
However, individual networks can display some heterogeneity in the ranking of the various curves. In the Supplementary Material~\ref{sec:detailed_renorm_sm} we provide the disaggregated analyses for each network.

To understand the structural origin of the Laplacian high-low-high fingerprint, we examine the \emph{spatial distribution} of harmonic deviation on an example network, the NetSci collaboration network (Figure~\ref{fig:hdev_viz}).
Panel~(b) shows the community assignments at four diffusion times $t$, while panel~(c) maps the per-node harmonic deviation $\sigma_x$ onto the same layout (white: $\sigma_x = 0$, dark red: $\sigma_x = 2$).
At early scales ($t = 0$), communities are small and tightly connected: nearly all nodes are internal to their macro-set and thus trivially harmonic, yielding uniformly low deviation.
As diffusion progresses ($t = 1$--$2$), communities grow and merge unevenly: nodes at the boundaries between newly formed macro-sets ---particularly at branching points of the network--- develop high deviation, reflecting asymmetric multiplicities toward different neighboring communities.
These high-deviation boundary nodes are the structural origin of the dip in $H_{\text{mod}}$ at intermediate compression.
At large scales ($t = 4$), the network has coalesced into few well-separated clusters; most nodes are again deep inside their community, and the sparse inter-cluster interfaces involve few boundary nodes with balanced connectivity, so deviation drops back to low values.
This confirms that the temporary loss of harmonicity at intermediate scales is a \emph{localized} phenomenon ---a transient imbalance at community boundaries during the merging regime--- rather than a global degradation of the partition quality.

The analysis above reveals that Laplacian renormalization consistently achieves the highest harmonic degree among the three methods, with values remaining above $0.7$ across most of the compression range. A natural question follows: can Laplacian renormalization achieve \emph{perfect} harmonicity---exact harmonic morphisms---in real networks? And if so, what structural properties enable this?

%% ====================================================================
%%  RESULTS: LAPLACIAN DEEP DIVE
%% ====================================================================
\section{Structural scales and dynamical preservation under Laplacian renormalization}
\label{section:laplacian_deep}

Among the three methods, Laplacian renormalization consistently achieves the highest harmonic degree across scales and network types. This motivates a closer examination of two remarkable properties: the spontaneous emergence of exact harmonic morphisms, and the relationship between harmonic degree and entropic susceptibility.

\subsection{Exact harmonic morphisms}

Our most striking empirical finding is that Laplacian renormalization can, at specific scales, produce coarse-grainings that satisfy horizontal conformality exactly, and therefore define exact harmonic morphisms. 
We verify this combinatorially, node by node: every node satisfies condition~(ii) of Definition~\ref{def:horizontal_conformal} across the renormalization process (across $ \varepsilon = 10^{-3}, 10^{-4} \text{ or } 0$), yielding $H_{\text{mod}} = 1$ (plotted values are rounded to three decimal places). 
We observe this behavior, with harmonic degree curves nearly constant at unity, in Facebook, Web-edu, CS Collab, and Yeast (see Supplementary Material~\ref{sec:detailed_renorm_sm} for full visualization). 

The Facebook network \cite{traud2012social} of \Cref{fig:figure_4}(a) provides a particularly instructive example. 
At diffusion time $t=1.0$, with $\varepsilon = 0$, the coarse-graining is an exact harmonic morphism with a characteristic structure: large groups of nodes connected to hubs collapse into macro-nodes, while single bridge nodes map to themselves in the coarse-grained graph. 
The resulting macro-sets exhibit balanced boundary structure---constant connection multiplicities across neighboring clusters---while the self-mapped bridge nodes maintain nearly uniform ties to their surrounding macro-sets. This is a multi-scale transformation in the precise sense of Sec.~\ref{section:harmonic_morphisms}, since groups of nodes at different hierarchical levels are placed on equal footing in the coarse-graining.

We hypothesize that exact harmonic morphisms emerge when diffusion-based merging produces cohesive macro-sets together with bridge nodes having balanced external multiplicities. Specifically, the merging criterion naturally respects the symmetries required for horizontal conformality: tightly connected groups merge early into cohesive macro-sets; nodes with balanced external connectivity resist merging (their diffusion spreads uniformly, so no single neighbor exceeds the threshold), becoming bridge nodes; at the right time scale, these two populations coexist with balanced boundary multiplicities.

\begin{figure}
    \centering
    \includegraphics[width=\linewidth]{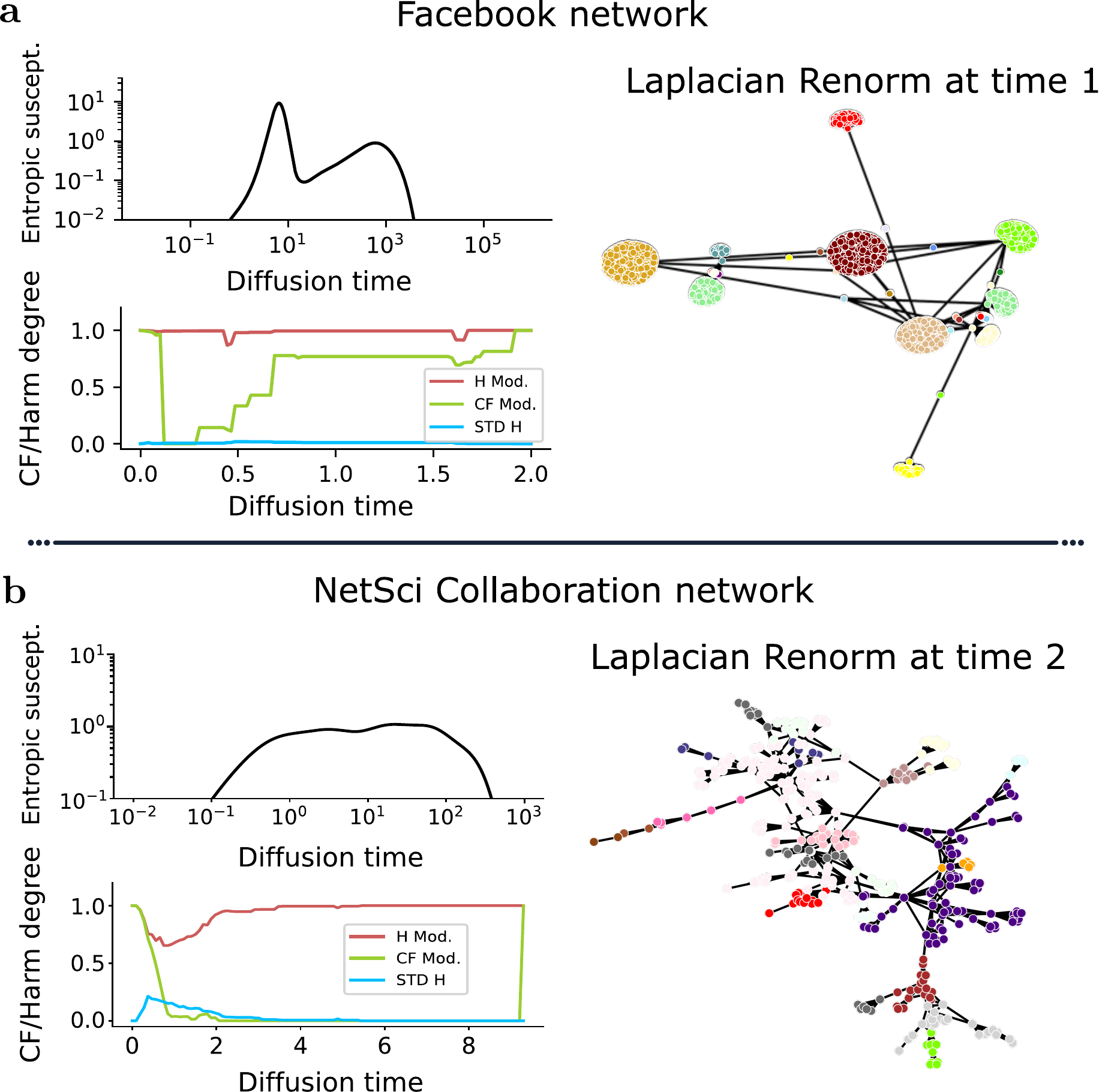}
    \caption{\textbf{Harmonic degree and entropic susceptibility.} 
    \textbf{a.}~The Facebook network \cite{traud2012social} maintains $H_{\text{mod}} \approx 1$ 
    throughout the Laplacian renormalization flow ($t \in [0,2]$), 
    indicating exact random-walk preservation, while the entropic 
    susceptibility $C(t)$ reveals two distinct structural scales 
    ($t \approx 10$ and $t \approx 10^3$). 
    \textbf{b.}~The NetSci collaboration network shows scale-invariant 
    diffusion (flat $C(t)$) but lower harmonic degree. 
    The two diagnostics are complementary: entropic susceptibility 
    characterizes the network's diffusion landscape, while harmonic 
    degree evaluates whether the coarse-graining induced at each 
    scale preserves random-walk dynamics.}
    \label{fig:figure_4}
\end{figure}

\subsection{Harmonic degree versus entropic susceptibility}

A natural question is how harmonic degree relates to existing diagnostics for Laplacian renormalization. 
The entropic susceptibility~\cite{Villegas_Laplacian_renorm_heterogeneous_2023,nurisso2025higher} $C(t) = -\dd S(t)/\dd \log t$ detects characteristic scales by measuring diffusion deceleration: local maxima indicate meso-scale structure, plateaus signal scale invariance. 
Comparing it with harmonic degree reveals a crucial distinction: \emph{entropic susceptibility detects scales that diffusion perceives; harmonic degree detects whether the transformation induced by diffusion preserves random walks}. 
These are related but fundamentally different properties.

For the Facebook network \cite{traud2012social} of \Cref{fig:figure_4}(a), $C(t)$ reveals two structural scales ($t \approx 10$ and $t \approx 10^3$), yet $H_{\text{mod}} \approx 1.0$ throughout the renormalization process in $t \in [0, 2]$.
The bridge nodes, which $C(t)$ does not explicitly track, balance the transformation despite apparent scale separation. 
More broadly, across our dataset we observe all possible combinations: scale-invariant diffusion with low harmonicity (Euroroad, NetSci Collab, Bcs09 Power Network) (see \Cref{fig:figure_4}(b) and Supplementary Material~\ref{sec:detailed_renorm_sm}), multi-scale structure with high harmonicity (Facebook, Web: education), high harmonicity with scale invariance (CS Collab and Bio: Plant to some extent) and several cases where we have neither high harmonicity nor scale invariance.
Harmonic degree and entropic susceptibility thus provide complementary information: the former characterizes the transformation (whether coarse-graining preserves dynamics), the latter characterizes the network itself (its diffusion properties). 
We argue that both are needed for a complete understanding of the effects of Laplacian renormalization.

%% ====================================================================
%%  HIGHER ORDER EXTENSION
%% ====================================================================

\begin{figure*}
    \centering
    \includegraphics[width=\linewidth]{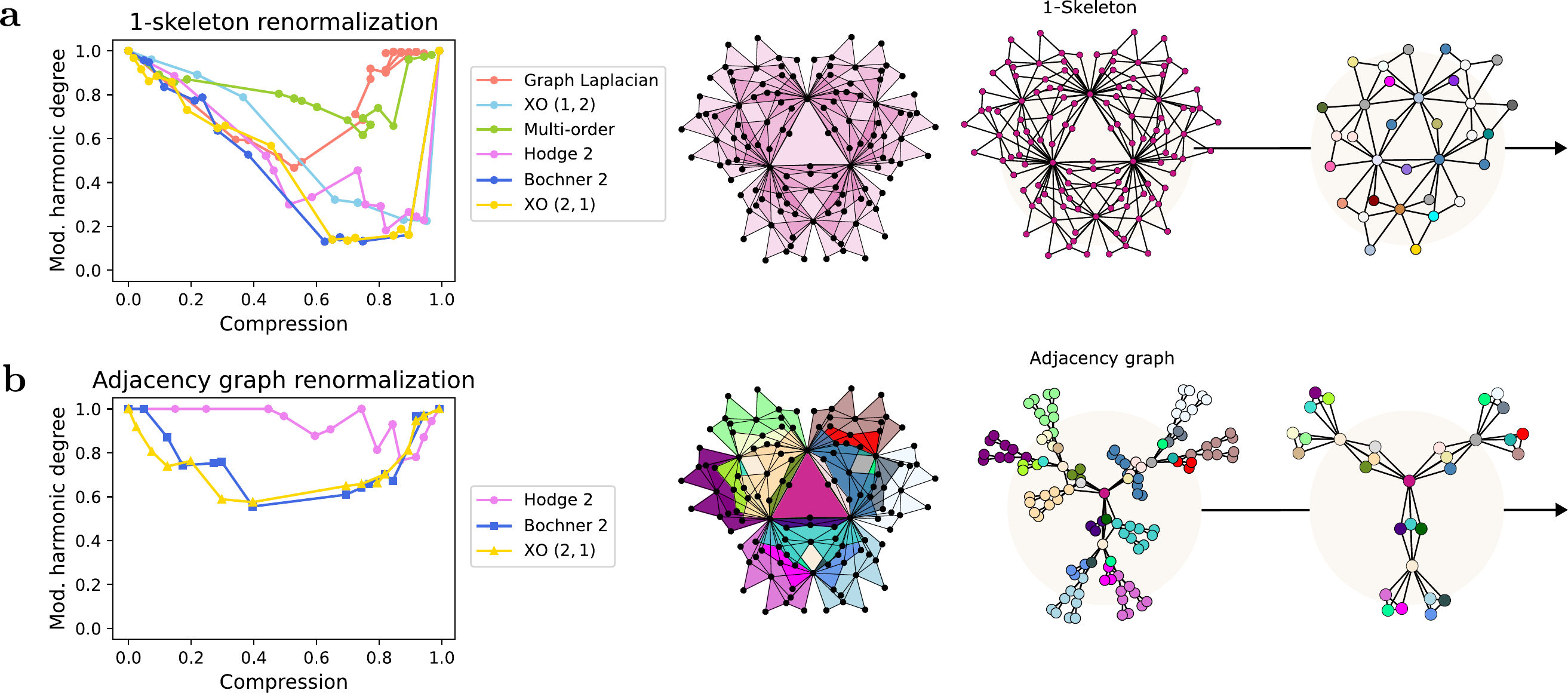}
    \caption{\textbf{Harmonic degrees of Pseudofractal simplicial complex of dimension $2$.} \textbf{a.} Harmonic degree curves for the Laplacian renormalization of the 1-skeleton of a pseudofractal simplicial complex, using different Laplacians. \textbf{b.} Harmonic degree curves evaluated for the Laplacian renormalization of the triangle-edge adjacency graph, using different Laplacians.}
    \label{fig:higher_order}
\end{figure*}

\section{Higher-order Laplacian renormalization}
\label{section:higher-order}

Another natural question is whether harmonic morphisms extend to higher-order networks~\cite{Battiston2021physics,Bianconi2021higher_order_book}, where dynamics live not on nodes but on simplices.
We explore this direction using simplicial complexes, building on the cross-order (XO-) Laplacian renormalization scheme of Ref.~\cite{nurisso2025higher} and extending it to Hodge and Bochner Laplacians~\cite{Eckmann1944HarmonischeFU,Forman2003BochnersMF,Schaub2020random_walks_SC}  via a generalized renormalization scheme (see Supplementary Material~\ref{sec:higher_order_sm} for details).

The central challenge is that harmonic degree is a node-based diagnostic, while higher-order diffusion operates on $k$-simplices.
Evaluating the 1-skeleton (i.e. the underlying graph) under renormalization schemes driven by edge or triangle diffusion consistently disrupts harmonicity: the partition that simplices induce on nodes does not respect node-level locality.
By contrast, the multi-order Laplacian~\cite{Lucas_2020}, which remains node-based while incorporating higher-order information, achieves performance comparable to standard Laplacian renormalization (see \Cref{fig:higher_order}(a)).
This confirms that the issue is not higher-order information per se, but the mismatch between the domain of diffusion and the domain of evaluation.

To resolve this incompatibility, we evaluate harmonicity on the graph where the higher-order dynamics actually live: the $(k,m)$-adjacency graph of the simplicial complex, describing the relations between $k$-simplices mediated by $m$-simplices.
In this approach, we apply harmonic degree to the coarse-graining of the $(k,m)$-adjacency graph induced by higher-order diffusion, rather than attempting to assess coarse-grainings between simplicial complexes directly.
Since the coarse-grained graphs are not guaranteed to be $(k,m)$-adjacency graphs of simplicial complexes, we refer to them as pseudo-adjacency graphs.
Further details on the distinction between adjacency and parallel adjacency graphs appear in the Supplementary Material~\ref{sec:higher_order_sm}.

Within this framework, we see in \Cref{fig:higher_order}(b) that on a 2-dimensional pseudofractal complex~\cite{Dorogovtsev2002pseudofractal}, 2-diffusion via the Hodge Laplacian produces persistent harmonic morphisms outperforming both the Bochner and XO Laplacians---an effect attributable to the Forman curvature term~\cite{Forman2003BochnersMF}, which acts as a reactive component in Hodge diffusion~\cite{nurisso2025higher}. However, when the same 2-diffusion is performed on a 3-dimensional complex, harmonicity is completely disrupted (see Supplementary Material~\ref{sec:higher_order_sm}, Fig.~\ref{fig:si:pf3d}). This dimensional selectivity suggests that harmonic degree under Hodge diffusion could serve as a diagnostic for the intrinsic topological dimension of simplicial complexes, though a rigorous characterization of this phenomenon remains an open question.
Intuitively, $k$-order Hodge diffusion captures interactions at scale $k$; when the complex has non-trivial structure beyond dimension $k$, higher-dimensional simplices introduce boundary effects that break the multiplicity balance required for harmonicity. The harmonic degree under Hodge diffusion thus acts as a resonance probe: high values signal that the chosen diffusion order matches the topological complexity of the data.

%% ====================================================================
%%  DISCUSSION
%% ====================================================================
\section{Discussion}
\label{section:discussion}

\subsection{Discrete conformal invariance}

Our framework establishes harmonic morphisms as discrete analogs of horizontally conformal maps, which in the continuum preserve Brownian motion up to time reparametrization \cite{Lawler2005conformal}. In differential geometry, horizontally conformal (or weakly conformal) maps preserve angles on the subspace orthogonal to the kernel of the differential~\cite{Petersen_diff}, and conformal invariance plays a central role in physics, from conformal field theory~\cite{Introduction_to_Conformal_Field_Theory} to the proof of conformal invariance for lattice models~\cite{Conformal_Invariance_Lattice_Models_Smirnov,Discrete_Complex_Analysis}.

For networks, which are both irregular and non-planar, classical conformal geometry is inapplicable. Yet harmonic morphisms constitute a discrete analog of horizontally conformal transformations, and our work establishes a link between preserving locally harmonic functions and globally mapping random walks under a suitable time change---mirroring how conformal invariance in grids and continua connects local geometry to global properties of physical models.
Our purely combinatorial approach, based only on adjacency and multiplicities, applies to arbitrary networks without geometric embeddings, extending conformal invariance principles beyond lattices and weighted graphs~\cite{Bobenko_Conformal_2015,conformallycovariantoperators}. 
More details on connections to conformal field theory appear in Supplementary Material~\ref{si:link-to-conformal-field-theory}.

\subsection{Connections to statistical mechanics}

Harmonic morphisms also impose a global spectral-arithmetic 
constraint: the Baker-Norine result~\cite{baker_Hyperelliptic_graphs} 
states that if a harmonic morphism $\varphi: G \to G'$ exists, then 
$\kappa_{G'}$ divides $\kappa_G$, where $\kappa$ denotes the 
spanning tree count. 
Through the Kirchhoff Matrix Tree Theorem 
($\kappa_G = \prod_{i=2}^N \lambda_i / N$), this becomes a 
divisibility condition on the pseudodeterminant of the Laplacian, 
complementing the local combinatorial symmetry of horizontal 
conformality with a global spectral invariant.

This constraint highlights a contrast with GNN-based renormalization. 
Harmonic morphisms constrain the Laplacian through a 
pseudodeterminant-level arithmetic relation inherited from 
spanning-tree divisibility, whereas GNN 
renormalization~\cite{Coarse_graining_network_De_Domenico} is designed 
to preserve the heat-trace partition function 
$Z(t) = \mathrm{Tr}(e^{-tL})$, through an objective based on 
preserving $Z(t)/N \approx Z'(t)/N'$ for all $t \geq 0$. 
The two  criteria probe different aspects of network structure: the former is 
tied to exact random-walk projection and a global spanning-tree 
invariant, while the latter targets spectral statistics aggregated 
over all diffusion modes. 
Whether the spanning-tree divisibility 
constraint carries deeper statistical-mechanical implications---for 
instance through connections to Laplacian-based ensembles---remains 
an open question.

\subsection{Limitations and outlook}
Our framework covers simple random walks (via harmonic morphisms) and lazy random walks (via combinatorial conformality) on undirected, unweighted graphs. 
Extensions to biased diffusion, directed networks (where boundary conditions for harmonic functions are more complex), weighted networks (connecting to geometric conformal theories~\cite{Bobenko_Conformal_2015}), and temporal networks remain open problems. 

Finding optimal harmonic morphisms on graphs is a completely unexplored question, related to graph gonality~\cite{Caporaso_Gonality_2}; we propose greedy algorithms (see Supplementary Material Section~\ref{sec:optimization_sm}) that work for $N \leq 60$ but encounter local optima for larger systems, suggesting the need for more principled approaches. 

Finally, a key open question is which networks admit exact harmonic morphism renormalization sequences. 
Our empirical identification of such networks (Facebook, Web-edu, CS Collab) suggests this may be related to the presence of balanced bridge structure, but a graph-theoretic characterization, perhaps connected to the above mentioned gonality~\cite{Caporaso_Gonality_2}, remains to be established. 
More broadly, extending harmonic morphism analysis to other dynamics (synchronization, spreading, consensus~\cite{Consensus_coarse_graining_Barahona}) would require identifying the appropriate ``harmonic'' structure for each operator, but the philosophy---identifying necessary and sufficient conditions for dynamical preservation---applies generally.

\subsection{Conclusion}

This work establishes harmonic morphisms as a principled foundation for network renormalization by identifying them as the exact coarse-graining maps that preserve simple-random-walk transition structure across macro-sets, up to an appropriate time change, bridging discrete probability theory, algebraic graph theory, and statistical mechanics. 
By proving that harmonic morphisms provide a necessary and sufficient condition for first-exit random-walk preservation (Theorem~\ref{theo:RW_preservation}), we answer: \textit{what must network coarse-graining preserve to maintain dynamical equivalence?}
Our key empirical discoveries---distinct dynamical fingerprints for renormalization methods, exact harmonic morphisms spontaneously emerging in real networks, and the complementarity of harmonic degree with entropic susceptibility---demonstrate that the framework provides both theoretical foundations and practical diagnostic tools for multi-scale network analysis.

We view this as one step toward a comprehensive theory of network organization across scales, grounded in the observation that \emph{how systems look at different scales} is determined by \emph{what transformations across scales preserve}.

\begin{acknowledgments}
G. P. is supported by the European Research Council (ERC) Consolidator Grant (Grant Agreement No. 101171380, project RUNES), and MSCA  ``BeyondTheEdge: Higher-Order Networks and Dynamic'' (Grant Agreement No. 101120085) 
\end{acknowledgments}

\subsection*{Author contributions.}

F.M.G.\ and G.P.\ conceived the project. F.M.G.\ developed the theoretical framework and proofs, and wrote the implementation code. F.M.G.\ and M.N.\ performed the numerical experiments and produced the figures. F.G.\ and A.A.\ contributed to the analysis and interpretation of results. G.P.\ supervised and led the project. All authors contributed to writing the manuscript.

\subsection*{Data and code availability.}
Code and data to reproduce the results of this paper are available at \url{https://github.com/nplresearch/harmonic_degree}.

\putbib[biblio]
\end{bibunit}

%% ====================================================================
%%  SUPPLEMENTARY MATERIAL
%% ====================================================================
\begin{bibunit}
\clearpage
\onecolumngrid
\setcounter{figure}{0}
\setcounter{table}{0}
\setcounter{equation}{0}
\setcounter{page}{1}
\setcounter{section}{0}

\makeatletter
\renewcommand{\thesection}{S\arabic{section}}
\renewcommand{\thesubsection}{\thesection.\arabic{subsection}}
\renewcommand{\thesubsubsection}{\thesubsection.\arabic{subsubsection}}
\renewcommand{\thefigure}{S\arabic{figure}}
\renewcommand{\theequation}{S\arabic{equation}}
\renewcommand{\thetable}{S\arabic{table}}

\begin{center}
	\textbf{\large Supplementary Material}
\end{center}

%% ====================================================================
\section{Proofs}
\label{sec:proofs_sm}

Throughout this section, we use the notation established in the main text: $G=(V,E)$ and $\mathcal{G}=(\mathcal{V},\mathcal{E})$ are two graphs, $\varphi: V \to \mathcal{V}$ is surjective, $G_y$ is the subgraph induced on $\varphi^{-1}(y)$, $\partial G_y$ is its boundary, $\partial G_{yy'} = \partial G_y \cap \varphi^{-1}(y')$, and $\mathcal{H}_y(x, y') = \mathcal{H}_{G_y}(x, \partial G_{yy'})$.

We recall that $\mathcal{H}(\cdot, B)$ is the unique harmonic function on $G$ satisfying the discrete Laplace equation $\Delta u = 0$ with boundary conditions $u = 1$ on $B$ and $u = 0$ on $\partial G \setminus B$.

We first establish a useful lemma.

\begin{lemma}
\label{lem:nonempty_boundary}
Let $\varphi$ be a surjective harmonic morphism and $G$ connected. If $y' \sim y$ then $\partial G_{yy'} \neq \emptyset$, excluding the trivial case $\partial G_y = \emptyset$ (i.e., $\mathcal{G}$ is a single node).
\end{lemma}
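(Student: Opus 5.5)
We argue via horizontal conformality, which Theorem~\ref{thm:urakawa} makes available since $\varphi$ is a surjective harmonic morphism. Fix $y \in \mathcal{V}$ with $\partial G_y \neq \emptyset$ and some $y' \sim y$. We must exhibit a node in $\varphi^{-1}(y')$ that is adjacent to a node of $\varphi^{-1}(y)$.

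First we find a node $x \in \varphi^{-1}(y)$ with at least one neighbour outside its macro-set. Since $\partial G_y \neq \emptyset$, there is some $z \notin \varphi^{-1}(y)$ adjacent to some $x \in \varphi^{-1}(y)$. (Even without this hypothesis, connectivity of $G$ and surjectivity onto a graph with at least two nodes would force such an edge.) By condition~(i) of Definition~\ref{def:horizontal_conformal}, $\varphi(z) \sim y$. Hence the multiplicity
\[
k_{\varphi(z)}(x) = |\{ w \in \varphi^{-1}(\varphi(z)) : w \sim x \}|
\]
is at least $1$.

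Next we apply condition~(ii) at this node $x$. The quantity $|\{ w \in \varphi^{-1}(y'') : w \sim x\}|$ is the same for all $y'' \sim y$. Evaluating it at $y'' = \varphi(z)$ gives a common value of at least $1$. Evaluating it at the given $y'$ then shows that $x$ has a neighbour $w \in \varphi^{-1}(y')$. Since $w \notin \varphi^{-1}(y)$ and $w \sim x$, we get $w \in \partial G_y \cap \varphi^{-1}(y') = \partial G_{yy'}$, so $\partial G_{yy'} \neq \emptyset$.

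The main subtlety is the quantifier in condition~(ii). It ranges over all $y'' \sim y$ in $\mathcal{G}$, not just over those macro-sets that $x$ actually touches. This is what transfers the positive multiplicity from $\varphi(z)$ to the arbitrary $y'$. We also need $\mathcal{G}$ simple, so that $y' \neq y$ and $\varphi^{-1}(y')$ is disjoint from $\varphi^{-1}(y)$. If one worries that a single node could have all its multiplicities equal to zero, the argument above avoids this, because we chose $x$ specifically to have an external neighbour.
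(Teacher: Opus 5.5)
Your proof is correct, but it takes a different route from the paper's. You pass through Urakawa's characterization (Theorem~\ref{thm:urakawa}) and use the fact that condition~(ii) quantifies over \emph{all} $y''\sim y$. A single external neighbour $z$ of $x$ gives $\varphi(z)\sim y$ by condition~(i), so the common multiplicity at $x$ is at least $1$, and condition~(ii) transfers it to $y'$.

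The paper argues by contradiction directly from Definition~\ref{def:harmonic_morphism}. Assuming $\partial G_{yy'}=\emptyset$, it picks a test function on $\mathcal{V}$ supported on $\{y,y'\}$ that is harmonic at $y$. It then checks that the pullback fails to be harmonic at a node $x'\in\varphi^{-1}(y)$ that has an external neighbour.

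\textbf{What each route buys.} The paper's argument is self-contained: it never needs the equivalence with horizontal conformality, whose forward direction is itself proved by this kind of test-function argument. Yours is shorter and purely combinatorial once Theorem~\ref{thm:urakawa} is granted.

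Your route also avoids a slip in the paper's version. The function $f(y)=1$, $f(y')=-1$, $f=0$ elsewhere is not harmonic at $y$, since $\Delta f(y)=1+1/\deg(y)\neq 0$. The argument is repaired by taking instead $f(y)=1$, $f(y')=\deg(y)$, $f=0$ elsewhere. This $f$ is harmonic at $y$. At $x'$ the pullback equals $1$, while the neighbour average is the fraction of neighbours of $x'$ lying in $\varphi^{-1}(y)$. That fraction is strictly less than $1$, because $x'$ has a neighbour in $\partial G_y$ and none in $\varphi^{-1}(y')$.

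Your side remark is also correct: connectivity of $G$ and surjectivity already force $\partial G_y\neq\emptyset$ whenever some $y'\sim y$ exists, so the excluded case is automatic.
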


\begin{proof}
Assume by contradiction that $y' \sim y$ but $\partial G_{yy'} = \emptyset$ and $\partial G_y \neq \emptyset$. Define $f: \mathcal{V} \to \mathbb{R}$ by $f(y) = 1$, $f(y') = -1$, and $f = 0$ elsewhere; this function is harmonic at $y$. Since $\partial G_y \neq \emptyset$, there exists $x' \in G_y$ adjacent to at least one node in $\partial G_y$. Now $f \circ \varphi$ equals $1$ on $\varphi^{-1}(y)$, $-1$ on $\varphi^{-1}(y')$, and $0$ elsewhere. Since $x'$ has no neighbors in $\varphi^{-1}(y')$ (by $\partial G_{yy'} = \emptyset$) but does have neighbors in $\partial G_y$, the pullback $f \circ \varphi$ is not harmonic at $x'$, leading to a contradiction.
\end{proof}

\subsection{Proof of Theorem~\ref*{theo:RW_preservation} (Random walk preservation)}

\textit{Statement.} $\varphi$ is a harmonic morphism if and only if $\mathcal{H}_y(x, y') = 1/\deg(y)$ for all $y \in \mathcal{V}$, $x \in \varphi^{-1}(y)$, and $y' \sim y$.

\begin{proof}

\noindent ($\Rightarrow$) Assume $\varphi$ is a harmonic morphism. If $\tau = \infty$ the statement holds vacuously. For finite graphs (our setting), $\tau < \infty$ almost surely. We condition on the last position before exit: let $X^x_{\tau-1} = z$. By Markovianity and horizontal conformality (condition~(ii) of Definition~\ref{def:horizontal_conformal}),
\[
\mathbb{P}(X^x_\tau \in \partial G_{yy'} \mid X^x_{\tau-1} = z) = \mathbb{P}(X^z_\tau \in \partial G_{yy'} \mid \tau = 1) = \frac{1}{\deg(y)}.
\]
Averaging over all possible $z$:
\[
\mathcal{H}_y(x, y') = \sum_{z \in G_y} \mathbb{P}(X^x_\tau \in \partial G_{yy'} \mid X^x_{\tau-1} = z)\, \mathbb{P}(X^x_{\tau-1} = z) = \frac{1}{\deg(y)}.
\]

\noindent ($\Leftarrow$) Assume the harmonic measure condition holds. We first show $\{y' : \partial G_{yy'} \neq \emptyset\} = \mathcal{N}(y)$, where $\mathcal{N}(y)$ denotes the set of vertices adjacent to $y$.

If $y' \sim y$ then $\partial G_{yy'} \neq \emptyset$, since otherwise $\mathcal{H}_y(x, y') = 0 \neq \mathbb{P}(Y_{t=1} = y' \mid Y_{t=0} = y)$. Hence $\mathcal{N}(y) \subseteq \{y' : \partial G_{yy'} \neq \emptyset\}$.

Now suppose by contradiction that $\mathcal{N}(y) \subsetneq \{y' : \partial G_{yy'} \neq \emptyset\}$. For any $x \in V$,
\[
1 = \sum_{y' \sim y} \mathbb{P}(Y_{t=1} = y' \mid Y_{t=0} = y) < \sum_{\{y' : \partial G_{yy'} \neq \emptyset\}} \mathcal{H}_y(x, y') = 1,
\]
using $\mathcal{N}(y) \subseteq \{y' : \partial G_{yy'} \neq \emptyset\}$, the hypothesis, and the strict inclusion---a contradiction. Thus $\{y' : \partial G_{yy'} \neq \emptyset\} = \mathcal{N}(y)$, which implies $\varphi$ is a weak homomorphism: if $x \sim z$ with $\varphi(x) \neq \varphi(z)$ and $\varphi(x) \not\sim \varphi(z)$, we would have $\partial G_{\varphi(x)\,\varphi(z)} \neq \emptyset$ but $\varphi(z) \notin \mathcal{N}(\varphi(x))$, contradicting the equality just established.

Now fix $y \in \mathcal{V}$ and $y' \sim y$. The function $\mathcal{H}_y(\cdot, y')$ is harmonic on $G_y$ with boundary values $1$ on $\partial G_{yy'}$ and $0$ on $\partial G_y \setminus \partial G_{yy'}$. Take $x \in \varphi^{-1}(y)$ and let $k_{y'}$ denote the number of neighbors of $x$ in $\varphi^{-1}(y')$, and $k_y$ the number in $\varphi^{-1}(y)$. By harmonicity of $\mathcal{H}_y(\cdot, y')$:
\[
(\deg(x) - k_y)\,\mathcal{H}_y(x, y') = k_{y'}.
\]
If $k_y = \deg(x)$ (all neighbors of $x$ are in the same macro-set), then $k_{y'} = 0$ for all $y' \sim y$ and the condition holds trivially. Otherwise, $\mathcal{H}_y(x, y') = k_{y'}/(\deg(x) - k_y) = 1/\deg(y)$ by hypothesis, so $k_{y'}$ is constant across all $y' \sim y$. This is exactly horizontal conformality.
\end{proof}

\subsection{Proof of Theorem~\ref*{thm:lazy_rw} (Lazy random walk preservation)}

\textit{Statement.} $\varphi$ is combinatorial conformal if and only if $\mathbb{P}(\mathcal{L}_{t=1} \in \varphi^{-1}(y') \mid \mathcal{L}_{t=0} = x) = 1/(\deg(y)+1)$ for all $y'$ with $y' \sim y$ or $y' = y$.

\begin{proof}

\noindent ($\Rightarrow$) Fix $y \in \mathcal{V}$ and $x \in \varphi^{-1}(y)$. For $y' \sim y$, let $k_{y'} = |\{z \in \varphi^{-1}(y') : z \sim x\}|$; for $y' = y$, let $k_y = |\{z \in \varphi^{-1}(y) : z \sim x \text{ or } z = x\}|$. Since $\varphi$ is a weak homomorphism, these induce a partition of $\mathcal{N}(x) \cup \{x\}$. By combinatorial conformality, all $k_{y'}$ are equal. Since $\mathbb{P}(\mathcal{L}_{t=1} \in \varphi^{-1}(y') \mid \mathcal{L}_{t=0} = x) = k_{y'}/(\deg(x)+1)$ and the pre-images partition $\mathcal{N}(x) \cup \{x\}$, the probabilities sum to~1, giving $k_{y'}/(\deg(x)+1) = 1/(\deg(y)+1)$ for each $y'$.

\noindent ($\Leftarrow$) First, we show $\varphi$ is a weak homomorphism. Suppose $\mathcal{N}(y) \cup \{y\} \subsetneq \{y' : \mathbb{P}(\mathcal{L}_{t=1} \in \varphi^{-1}(y') \mid \mathcal{L}_{t=0} = x) \neq 0\}$. Then $1 = \sum_{y' \sim y \text{ or } y'=y} 1/(\deg(y)+1) < \sum_{\{y': \mathbb{P} \neq 0\}} \mathbb{P}(\mathcal{L}_{t=1} \in \varphi^{-1}(y') \mid \mathcal{L}_{t=0} = x) = 1$, a contradiction. Hence $\varphi$ is a weak homomorphism, and the pre-images of $\{y' : y' \sim y \text{ or } y' = y\}$ partition $\mathcal{N}(x) \cup \{x\}$.

Now, $\mathbb{P}(\mathcal{L}_{t=1} \in \varphi^{-1}(y') \mid \mathcal{L}_{t=0} = x) = k_{y'}/(\deg(x)+1) = 1/(\deg(y)+1)$ by hypothesis, so all $k_{y'}$ are equal---which is exactly the definition of combinatorial conformality.
\end{proof}

%% ====================================================================
\section{Conformal degree metrics}
\label{sec:conformal_metrics_sm}

Analogously to the harmonic degree metrics defined in the main text (Sec.~\ref*{section:harmonic_degree}), we define conformal degree metrics based on combinatorial conformality (Definition~\ref*{def:combinatorial_conformal}). For each node $x$ with $\varphi(x)=y$, we compute the extended multiplicity
\begin{equation}
\tilde{k}_{y'}(x) := |\{ z \in \varphi^{-1}(y') : z \sim x \text{ or } z = x \}|
\end{equation}
for all $y'$ with $y' \sim y$ or $y' = y$. A node $x$ is a \emph{conformal node} if all $\tilde{k}_{y'}(x)$ are equal, i.e., the multiplicity condition extends to include the node's own macro-set (counting itself). We denote by $CF(A)$ the subset of conformal nodes in $A \subseteq V$.

The three conformal degree metrics are:
\begin{enumerate}
\item \textbf{Mean conformal degree:}
\begin{equation}
CF_{\text{mean}} = \frac{|CF(V)|}{|V|},
\end{equation}
the fraction of conformal nodes in the network.

\item \textbf{Modified conformal degree:}
\begin{equation}
CF_{\text{mod}} = \frac{1}{|\mathcal{V}|}\sum_{y \in \mathcal{V}} \frac{|CF(\varphi^{-1}(y))|}{|\varphi^{-1}(y)|},
\end{equation}
averaging conformality within each macro-set to prevent large clusters from dominating.

\item \textbf{Conformal deviation:}
\begin{equation}
CF_{\text{Dev}} = \frac{1}{|V|}\sum_{x \in V} \mathrm{std}\!\left(\{\tilde{k}_{y'}(x) : y' \sim \varphi(x) \text{ or } y' = \varphi(x)\}\right),
\end{equation}
a continuous measure of multiplicity imbalance including the within-macro-set direction.
\end{enumerate}

In practice, combinatorial conformality is far more restrictive than horizontal conformality. Across 50 networks $\times$ 5 clustering methods = 250 experiments (Sec.~\ref{sec:clustering_sm}), $CF_{\text{mod}}\simeq 0$ in approximately 85\% of cases. The most notable exception is Infomap on the Facebook network ($CF_{\text{mod}} = 0.625$), where high-betweenness nodes form singleton clusters that trivially satisfy conformality through the identity mapping. Non-zero conformal degree generally signals nodes mapped to themselves rather than genuine balanced mixing including within-macro-set connections.

%% ====================================================================
\section{Computational implementation of harmonic degrees}
\label{sec:computational_sm}

Computing harmonic and conformal degrees is straightforward for a given partition. The algorithm proceeds as follows:

\begin{enumerate}
\item \textbf{Construct induced graph.} Given partition $\varphi^{-1}(y)$ for each $y \in \mathcal{V}$, create $\mathcal{G}$ with edge $y \sim y'$ whenever $\exists x \in \varphi^{-1}(y), z \in \varphi^{-1}(y')$ with $x \sim z$ in $G$.

\item \textbf{For each node $x \in V$:}
\begin{itemize}
\item Determine $y = \varphi(x)$
\item For each neighboring macro-node $y' \sim y$, count $k_{y'}(x) = |\{z \in \varphi^{-1}(y') : z \sim x\}|$
\item Check if all $k_{y'}(x)$ are equal (harmonic node)
\item Compute $\sigma_x = \text{std}(\{k_{y'}(x)\})$
\item For conformality: include $\tilde{k}_y(x) = |\{z \in \varphi^{-1}(y) : z \sim x \text{ or } z=x\}|$
\end{itemize}

\item \textbf{Aggregate metrics.} Compute $H_{\text{mean}}$, $H_{\text{mod}}$, $H_{\text{Dev}}$ and conformal analogs.
\end{enumerate}

%The computational complexity is $O(|E| + |V||\mathcal{V}|)$: we iterate over each edge once to build the induced graph, then for each node we check $|\mathcal{N}(\varphi(x))|$ neighboring macro-nodes. For renormalization sequences producing $\log N$ scales, total cost is $O(|E| \log N)$, efficiently computable even for large networks.

For Laplacian renormalization, we implement the merging criterion as $\rho_{ij}(t) \geq \min(\rho_{ii}(t), \rho_{jj}(t)) - \varepsilon$, where $\varepsilon$ is a small regularization parameter ($\varepsilon \in \{0, 10^{-3}, 10^{-4}\}$ depending on the network) introduced to control numerical stability near the merging threshold. This modification does not alter the qualitative behavior of the renormalization flow.

%% ====================================================================
\section{Community detection as one-step coarse-graining}
\label{sec:clustering_sm}

Community detection algorithms partition networks into groups, providing a natural testbed for our metrics. Any partition induces a weak homomorphism (Definition~\ref*{def:horizontal_conformal}, condition~i) to a coarse-grained graph: macro-nodes represent groups, and $\varphi(x) \sim \varphi(z)$ whenever $x \sim z$ with $\varphi(x) \neq \varphi(z)$. We can then compute harmonic and conformal degrees for the resulting coarse-graining.

\textbf{What does harmonic degree measure for clustering?} It evaluates clustering from a global, ``external'' perspective. While community detection algorithms identify nodes with similar internal properties (high intra-group density, shared roles), harmonic degrees measure the \emph{mixing symmetries} induced at the macro-scale: are macro-sets connected in a balanced way that preserves random walk structure?

High harmonic degree indicates that macro-set boundaries are organized such that walkers exit uniformly toward neighboring macro-sets, matching coarse-grained transition probabilities. Low harmonic degree reveals asymmetric mixing: some nodes have many connections to one neighboring macro-set but few to another, distorting the random walk when projected.

Importantly, harmonic degree is \emph{not} a quality metric for clustering in the traditional sense. It does not assess intra-group coherence or separation (modularity, conductance). Instead, it quantifies whether the resulting partition is compatible with random walk renormalization. A partition with perfect modularity could have low harmonicity if macro-sets connect asymmetrically; conversely, a partition with moderate modularity could achieve high harmonicity through balanced mixing.

\subsection{Benchmark datasets and clustering methods}
We consider approximately 50 real networks, divided into 6 categories: social, infrastructure, biological, tech/web, animal, and collaboration.
Data sources include SNAP~\cite{snapnets}, BioSNAP~\cite{biosnapnets}, and the Network Repository~\cite{Network_Repository}.
Table~\ref{tab:dataset_sources} lists the networks discussed individually in this work together with their original references.
For each network, we perform community detection using five established algorithms: label propagation~\cite{Lab_prob,Lab_prop_networks}, greedy modularity~\cite{Modularity_Newman_2006}, Louvain algorithm~\cite{Louvain_algo}, Infomap~\cite{mapequation2025software,Map_equation_networks}, and non-parametric Bayesian inference via Stochastic Block Models~\cite{Peixoto_Bayesian_SBM_2019}.
The same networks are also used in the renormalization analysis. The complete list of networks and per-network results across all methods are available in the accompanying code repository.
 
\begin{table}[h]
\centering
\caption{Networks discussed individually in this work, with original data references.
Networks obtained from aggregate repositories (SNAP, BioSNAP, Network Repository) are cited to the original study that produced the data.}
\label{tab:dataset_sources}
\begin{tabular}{lll}
\toprule
Network & Category & Reference \\
\midrule
Facebook (Caltech36)          & Social          & Traud et al.~\cite{traud2012social} \\
Facebook (Haverford76)        & Social          & Traud et al.~\cite{traud2012social} \\
%% ADD/REMOVE Facebook subnetwork rows as needed (Colgate88, UC64, etc.)
Song of Ice and Fire          & Social          & Beveridge \& Shan~\cite{beveridge2016network} \\
Euro-Road                     & Infrastructure  & \v{S}ubelj \& Bajec~\cite{subelj2011euroroad} \\
Minnesota Road                & Infrastructure  & Bader et al.~\cite{bader2013dimacs10} \\  %% VERIFY: confirm this is the DIMACS10 version
Power: bcspwr09               & Infrastructure  & Duff et al.~\cite{duff1989sparse} \\
NetSci collaboration          & Collaboration   & Newman~\cite{newman2006finding} \\
CS collaboration              & Collaboration   & Newman~\cite{newman2006finding} \\  %% VERIFY: confirm same source or different
\textit{C.~elegans} metabolic & Biological      & Duch \& Arenas~\cite{duch2005community} \\
Yeast (protein interaction)   & Biological      & Jeong et al.~\cite{jeong2001lethality} \\ %% VERIFY: confirm which PPI version
Web: education                & Tech/Web        & Gleich \& Rossi~\cite{Network_Repository} \\ %% VERIFY: identify original source
%% Weaver                     & Animal          & %% VERIFY: identify original study from ASNR
%% Tortoise                   & Animal          & %% VERIFY: identify original study from ASNR
%% Bio: Plant                 & Biological      & %% VERIFY: identify original source
\bottomrule
\end{tabular}
\end{table}

\subsection{No universal best method}
Harmonic degree values vary substantially across both networks and clustering methods, with no algorithm consistently producing the highest or lowest harmonicity. This network-dependence suggests harmonic degree captures genuine structural variation rather than merely reflecting algorithmic biases.

Table~\ref{tab:example_networks} shows representative examples selected to illustrate the range of behaviors; full per-network results are provided in the code repository. 
The Facebook network achieves high harmonicity ($H_{\text{mod}} > 0.96$) across label propagation, greedy modularity, Louvain, and Infomap, but SBM produces very low values ($H_{\text{mod}} = 0.038$). Conversely, the Power: bcspwr09 network shows SBM achieving highest harmonicity ($H_{\text{mod}} = 0.951$) while label propagation produces moderate values ($H_{\text{mod}} = 0.497$). The C.~Elegans metabolic network presents yet another pattern: label propagation achieves $H_{\text{mod}} = 0.889$ while all other methods yield $H_{\text{mod}} < 0.4$. However, in this network label propagation produces a spurious macro-cluster containing the majority of nodes, increasing the harmonicity.

\begin{table}[h]
\centering
\caption{Harmonic and conformal degrees for selected networks across clustering methods.}
\label{tab:example_networks}
\begin{tabular}{llccc}
\toprule
Network & Method & $H_{\text{mod}}$ & $H_{\text{Dev}}$ & $CF_{\text{mod}}$ \\
\midrule
Facebook & LabProp & 0.966 & 0.022 & 0.000 \\
& Greedy Mod. & 0.989 & 0.008 & 0.000 \\
& Louvain & 0.989 & 0.008 & 0.000 \\
& Infomap & 0.985 & 0.010 & 0.625 \\
& SBM & 0.038 & 0.962 & 0.038 \\
\midrule
Web: edu & LabProp & 0.960 & 0.005 & 0.000 \\
& Greedy Mod. & 0.999 & 0.002 & 0.000 \\
& Louvain & 0.999 & 0.002 & 0.000 \\
& Infomap & 0.977 & 0.010 & 0.000 \\
& SBM & 0.831 & 0.116 & 0.000 \\
\midrule
C.~Elegans & LabProp & 0.889 & 0.030 & 0.000 \\
& Greedy Mod. & 0.382 & 0.564 & 0.000 \\
& Louvain & 0.364 & 0.619 & 0.000 \\
& Infomap & 0.358 & 0.510 & 0.000 \\
& SBM & 0.061 & 0.896 & 0.000 \\
\midrule
Power BCS09 & LabProp & 0.497 & 0.233 & 0.000 \\
& Greedy Mod. & 0.890 & 0.051 & 0.000 \\
& Louvain & 0.876 & 0.058 & 0.000 \\
& Infomap & 0.646 & 0.155 & 0.000 \\
& SBM & 0.951 & 0.023 & 0.000 \\
\midrule
CS Collab & LabProp & 0.910 & 0.044 & 0.000 \\
& Greedy Mod. & 0.964 & 0.017 & 0.000 \\
& Louvain & 0.960 & 0.018 & 0.000 \\
& Infomap & 0.919 & 0.032 & 0.000 \\
& SBM & 0.202 & 0.653 & 0.130 \\
\midrule
Facebook Haverford76 & LabProp & 0.84 & 1.16 & 0.000 \\
& Greedy Mod. & 0.018 & 10.1 & 0.000 \\
& Louvain & 0.02 & 6.44 & 0.000 \\
& Infomap & 0.328 & 1.66 & 0.001 \\
& SBM & 0.003 &1.63 & 0.000 \\
\bottomrule
\end{tabular}
\end{table}

% \begin{figure}[H]
%     \centering
%     \includegraphics[width= 1 \linewidth]{figures/Clustering.pdf}
%     \caption{Examples of analyzed networks with varying harmonic behavior across clustering algorithms. Starting from the top left: Facebook network, Web: edu network, C. Elegans metabolic network, Power: bcspwr09 network, Facebook Haverford76 network, and CS Collaboration network.}
%     \label{fig: Fig_3}
% \end{figure}

\subsection{Structural patterns enabling high harmonicity}

High harmonicity generally occurs in networks where macro-sets communicate through a small number of well-defined interface nodes. Hub-spoke architectures (e.g., Barab\'asi-Albert networks) naturally achieve high harmonicity through node collapses. More generally, high harmonicity requires that boundary nodes distribute their external connections uniformly across neighboring macro-sets. Some collaboration networks (CS, NetSci) consistently show $H_{\text{mod}} > 0.9$ across most methods, reflecting natural group structure with distributed inter-group ties. Some Web networks (education, Indochina) achieve mean $H_{\text{mod}} > 0.80$, as link structure organized around topics creates macro-sets with relatively uniform external connectivity.
Conversely, low harmonicity indicates asymmetric mixing. 
Some Facebook subnetworks (Colgate88, UC64, Haverford76) show $H_{\text{mod}} < 0.2$ for modularity-based methods, reflecting heterogeneous degree distributions within macro-sets and irregular boundary structures.

\subsection{Method-specific patterns}

\textbf{Greedy modularity and Louvain} produce similar results (shared modularity objective), achieving high harmonicity on collaboration (mean $H_{\text{mod}} = 0.88$ and $0.865$ respectively), animal (mean $H_{\text{mod}} = 0.83$ and $0.81$), and infrastructure networks (mean $H_{\text{mod}} = 0.89$ and $0.88$), but performing poorly on biological networks (mean $H_{\text{mod}} = 0.32$ and $0.27$).

\textbf{Label propagation} shows high variance, sometimes producing giant clusters with small satellites yielding very high harmonicity, especially on biological networks (mean $H_{\text{mod}} > 0.90$). It gives lower values on infrastructure networks (mean $H_{\text{mod}} = 0.47$) and mean $H_{\text{mod}} \geq 0.7$ for all other categories (social, collaboration, animal, tech).

\textbf{Infomap} typically produces mid-range harmonicity (mean $H_{\text{mod}} = 0.51$) but consistently achieves non-zero conformal degrees, even moderately high in some cases such as the Facebook network ($CF_{\text{mod}} = 0.625$). This reflects Infomap's focus on random walk flow: high-betweenness nodes forming singleton clusters perceive the identity map, satisfying conformality trivially.

\textbf{Stochastic Block Model} exhibits strong network-dependence: highest harmonicity on infrastructure networks (total mean $H_{\text{mod}} = 0.77$), especially for power grids, but very low values on social and protein networks (mean $H_{\text{mod}} < 0.2$). SBM identifies statistically significant block structure that may not align with mixing symmetries.

\subsection{Systematic patterns across network categories}

\textbf{Social networks} ($n=12$): High variability. Facebook friendship networks range from $H_{\text{mod}} = 0.989$ (main Facebook) to $H_{\text{mod}} = 0.176$ (Haverford76) under greedy modularity. SBM is generally low across all these networks.

\textbf{Infrastructure networks} ($n=7$): Consistently high harmonicity, with a mean $H_{\text{mod}} = 0.74$ across all methods. Physical constraints (geographic distribution, load balancing) likely produce hierarchical structure with symmetric communication patterns.

\textbf{Biological networks} ($n=16$): Very high variability ranging from mean $H_{\text{mod}} = 0.30$ in the neuron protein-protein interaction network to mean $H_{\text{mod}} = 0.81$ in the plant metabolic network. Label propagation often produces giant components yielding spuriously high harmonicity through collapse rather than genuine balanced mixing.

\textbf{Tech/Web networks} ($n=9$): Moderately-high harmonicity with total mean $H_{\text{mod}} = 0.64$.
Web structure organized around topics creates relatively balanced cross-group connectivity.

\textbf{Animal networks} ($n=2$): Moderate to high values, total mean $H_{\text{mod}} = 0.75$.

\textbf{Collaboration networks} ($n=4$): Consistently high across most methods; for example, CS has $H_{\text{mod}} > 0.90$ with all methods except SBM ($H_{\text{mod}} = 0.20$), and NetSci has $H_{\text{mod}} = 0.82$ with all methods. The total mean $H_{\text{mod}}$ is $0.75$.

\subsection{Conformality: singleton clusters and identity mappings}

Across 50 networks $\times$ 5 methods = 250 experiments, combinatorial conformality is predominantly zero ($CF_{\text{mean}} = CF_{\text{mod}} = 0$). The most notable case is Infomap on the Facebook network, with $CF_{\text{mod}} = 0.625$. Infomap consistently yields non-zero conformality, as do the greedy optimization methods, producing non-null conformalities in infrastructure, tech, and social networks, as well as a few biological networks.
Non-zero conformal degree is often a sign that the coarse-graining includes nodes mapped to themselves (identity on a subset). Since conformality requires lazy random walk preservation (Theorem~\ref*{thm:lazy_rw}), these cases involve no temporal scale change.

\section{Additional renormalization results}
\label{sec:detailed_renorm_sm}
 
This section provides the disaggregated harmonic and conformal degree analyses supporting the main text discussion in Sec.~\ref*{section:RG} and Sec.~\ref*{section:laplacian_deep}. We first present aggregate results across all 16 networks for which all three renormalization methods could be run, then examine individual networks organized by method to illustrate how the dynamical fingerprints described in the main text manifest in specific topologies.
 
\subsection{Aggregate results across networks}
Figures~\ref{fig:si:all_results_harmonic} and~\ref{fig:si:all_results_conformal} show per-network harmonic and conformal degree curves alongside their averages. The three fingerprints identified in the main text---the S-shaped geometric curve, the high-low-high Laplacian pattern, and the uniformly low GNN signature---are clearly visible in the averaged curves and are broadly reproduced across individual networks, confirming that these are method-level signatures rather than artifacts of particular topologies.

Several features of the disaggregated data merit comment. First, the variance across networks is smallest for Laplacian renormalization at small compression ($\eta < 0.4$), where nearly all networks exhibit $H_{\text{mod}} > 0.7$. This reflects the universal tendency of short-time diffusion to merge only tightly connected node pairs, which naturally produce symmetric macro-set boundaries. The variance increases at intermediate compression, where network-specific heterogeneity in community structure determines how evenly diffusion spreads.

Second, geometric renormalization shows the widest inter-network spread, particularly at intermediate scales. Networks with clear latent geometric structure (road networks, spatial networks) follow the S-curve closely, while networks lacking such structure (character interaction networks, some social networks) can deviate substantially, sometimes failing to reach high harmonicity even at maximal compression. 
 
Third, for GNN-based renormalization, the low harmonicity is remarkably consistent across networks: no network achieves $H_{\text{mod}} > 0.6$ at any compression level, and most drop below $0.2$ by $\eta = 0.5$. This uniformity reflects the method's spectrum-preserving objective, which distributes nodes across macro-sets by structural role irrespective of local mixing symmetry.
 
For conformal degree (Fig.~\ref{fig:si:all_results_conformal}), the dominant pattern is $CF_{\text{mod}} \approx 0$ for geometric and GNN renormalization across all networks. Laplacian renormalization shows non-trivial conformal degree in a subset of networks, typically those that also achieve high harmonic degree, though $CF_{\text{mod}}$ remains well below $H_{\text{mod}}$ in all cases. The gap $H_{\text{mod}} - CF_{\text{mod}}$ reflects the fact that most Laplacian merging operations create macro-sets with asymmetric internal connectivity even when external multiplicities are balanced.
 
\begin{figure}[tbp]
    \centering
    \includegraphics[width=\linewidth]{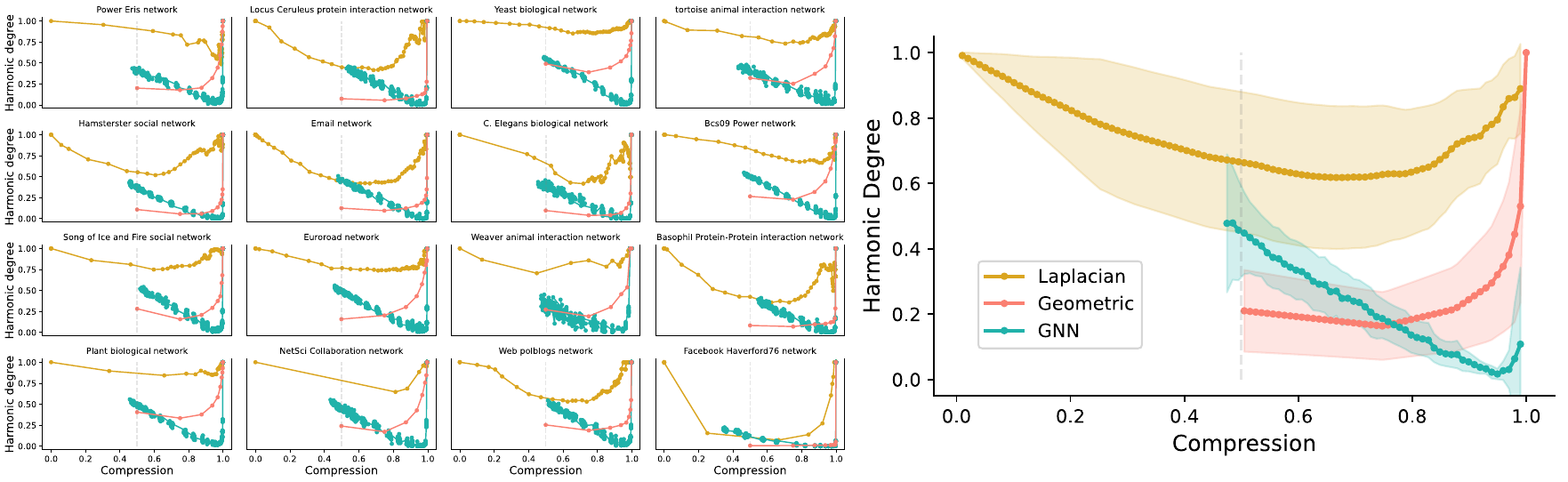}
    \caption{Harmonic degree as a function of compression for 16 networks (left) and their average (right), for geometric (top), Laplacian (middle), and GNN-based (bottom) renormalization.}
    \label{fig:si:all_results_harmonic}
\end{figure}
 
\begin{figure}[tbp]
    \centering
    \includegraphics[width=\linewidth]{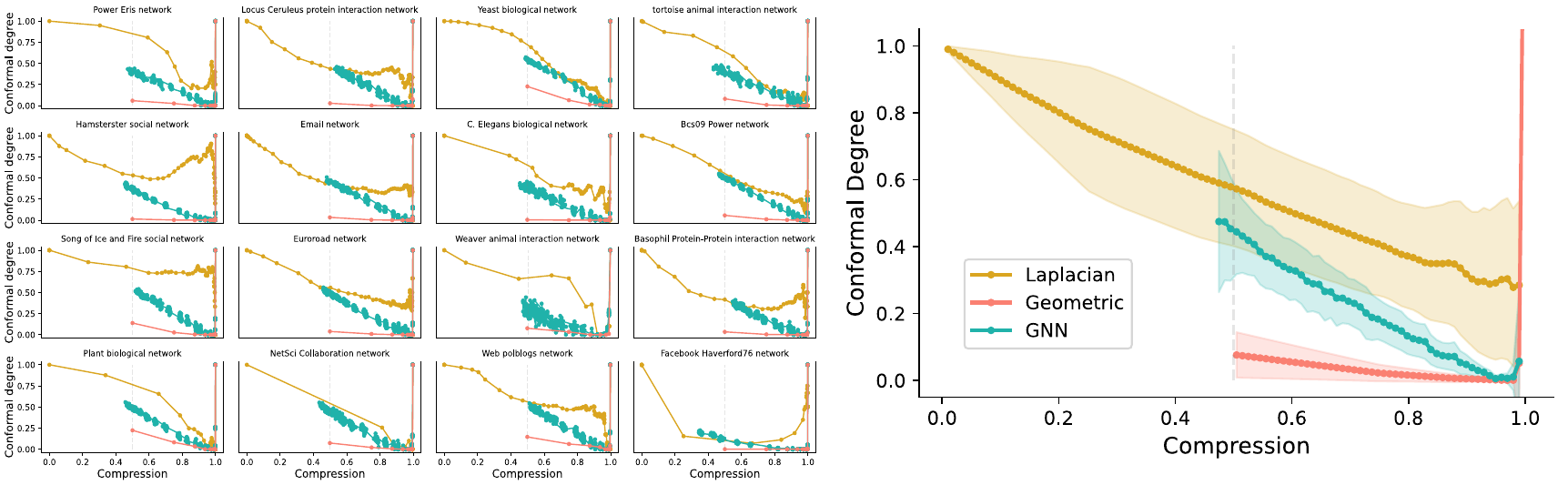}
    \caption{Conformal degree as a function of compression for 16 networks (left) and their average (right), for geometric (top), Laplacian (middle), and GNN-based (bottom) renormalization.}
    \label{fig:si:all_results_conformal}
\end{figure}

\subsection{Geometric renormalization: individual networks}
 
We present three networks that illustrate the range of behavior under geometric renormalization. The Power: bcspwr09 network (Fig.~\ref{fig:si:geometric_renorm_examples}a) is an infrastructure network whose nodes are geographically distributed; its harmonic degree curve follows the canonical S-shape, with low harmonicity at early iterations giving way to high values as the angular merging captures the coarse spatial organization of the power grid. This is the regime where geometric renormalization performs closest to Laplacian methods.
 
The Minnesota Road network (Fig.~\ref{fig:si:geometric_renorm_examples}b) shows a similar S-curve, consistent with its clear planar geographic structure. Road networks are natural candidates for hyperbolic embedding, and the angular proximity used by the Mercator algorithm aligns well with geographic adjacency at coarse scales.
 
The Song of Ice and Fire character network (Fig.~\ref{fig:si:geometric_renorm_examples}c) presents a contrasting case: geometric renormalization fails to achieve high harmonicity at any scale. This network, built from character co-occurrences in a narrative, lacks a natural geometric embedding. The angular proximity criterion groups characters that are ``close'' in the embedding space but may interact with very different subsets of the cast, producing asymmetric multiplicities throughout the RG flow. This example illustrates that the S-curve fingerprint requires genuine latent geometry; without it, the global embedding imposes artificial structure that conflicts with local connectivity at all scales.

\begin{figure}
\centering
\includegraphics[width=.8\linewidth]{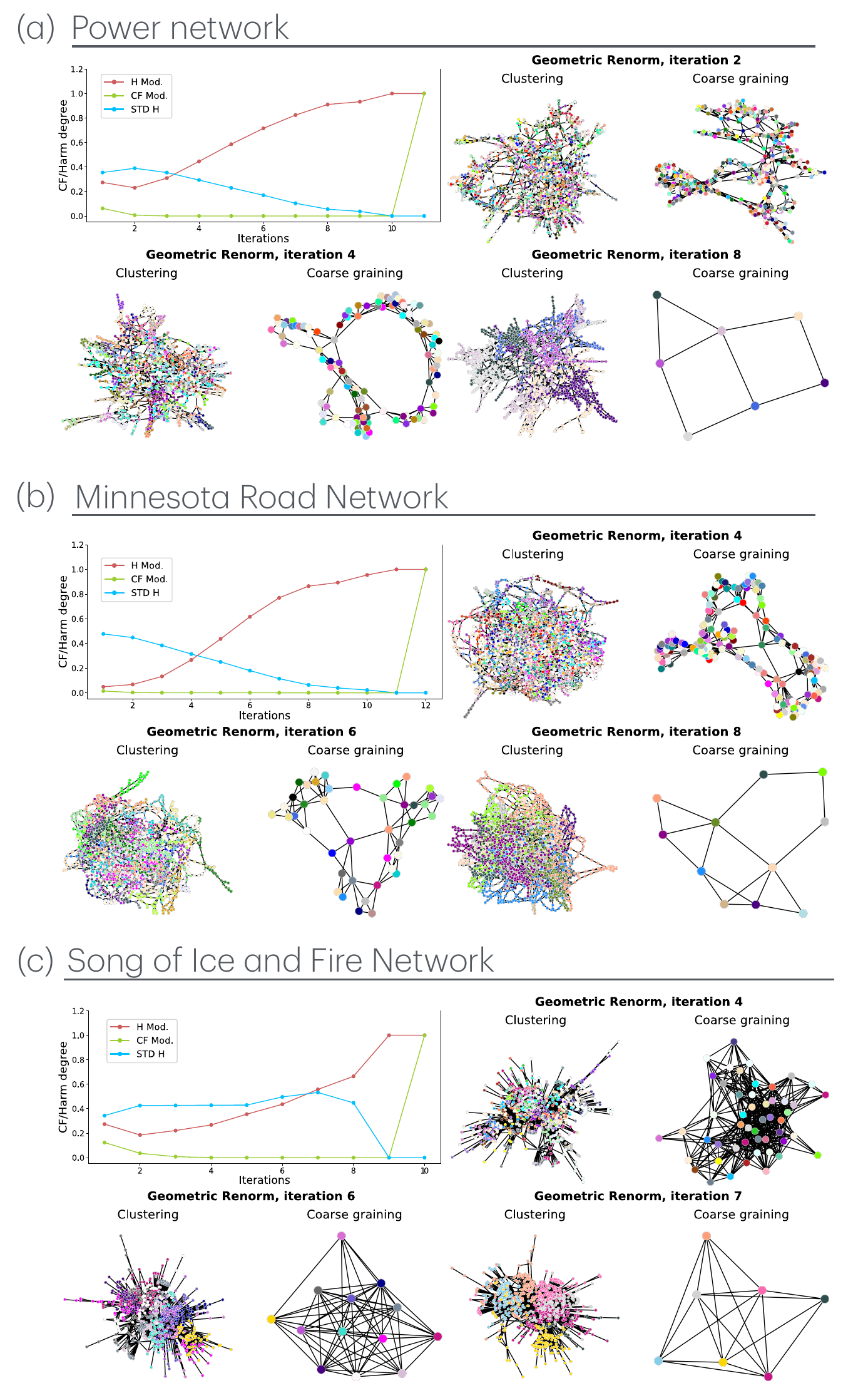}
\caption{Geometric renormalization of the (a) power (bcspwr09) network, (b) Minnesota Road network, and (c) Song of Ice and Fire character network. 
In the first two spatially embedded networks, the harmonic degree follows the canonical S-shaped curve, reflecting underlying geographic organization and increasing harmonicity as angular merging captures coarse spatial structure. By contrast, the Song of Ice and Fire network lacks such latent geometry, and correspondingly does not attain high harmonicity at any scale.}\label{fig:si:geometric_renorm_examples}
\end{figure}
 
\subsection{Laplacian renormalization: individual networks}

\begin{figure}
    \centering
    \includegraphics[width=.9\linewidth]{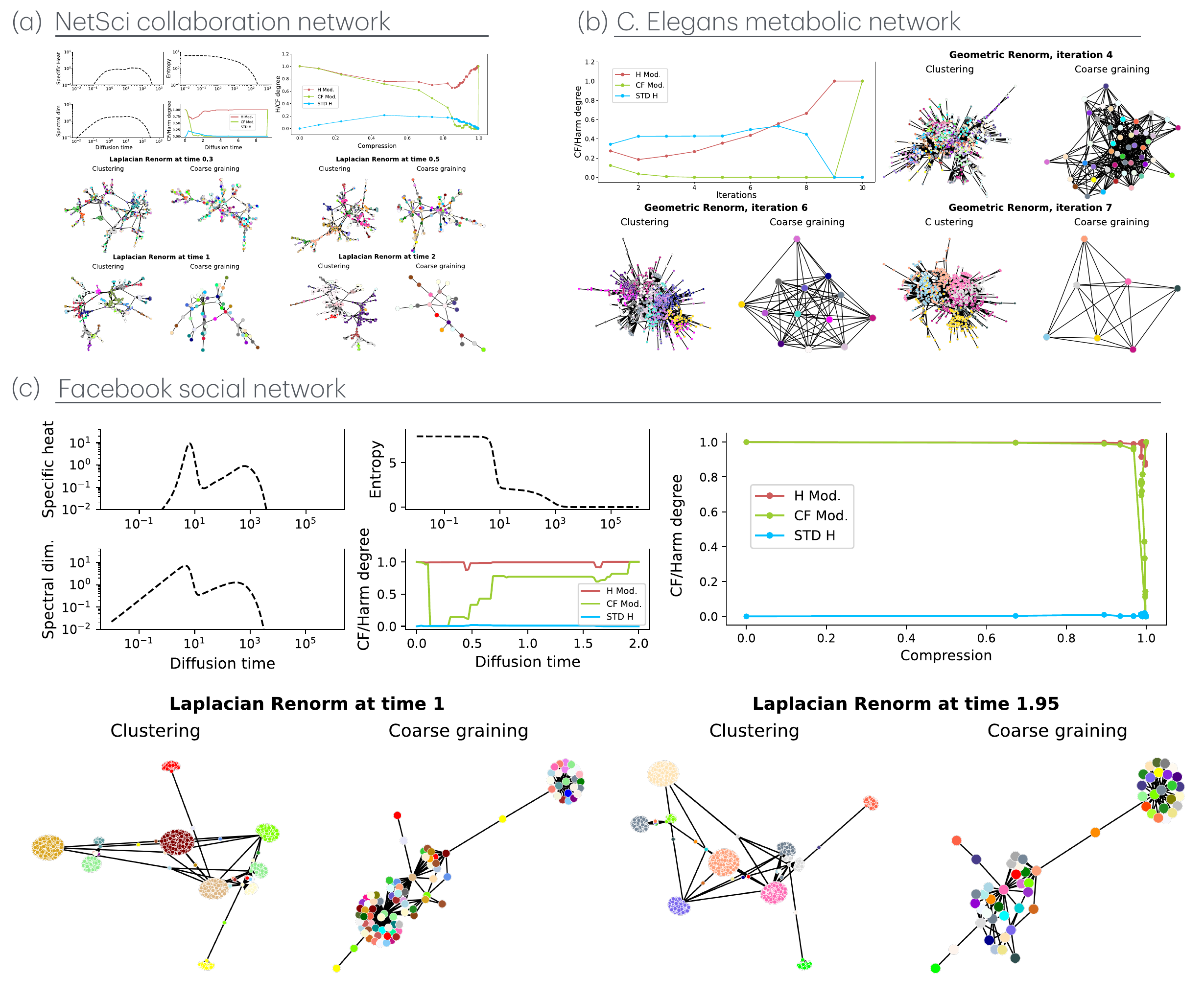}
    \caption{Laplacian renormalization examples for (a) the NetSci collaboration network, (b) the \textit{C.~elegans} metabolic network, and (c) the Facebook social network. 
    In the NetSci network, harmonicity is high at small scales, where clusters are linked through one or few interface nodes ($t=0.3$, $H_{\mathrm{mod}}=0.87$), decreases at intermediate scales due to asymmetric merging ($t=0.5$, $H_{\mathrm{mod}}=0.71$; $t=1$, $H_{\mathrm{mod}}=0.65$), and rises again at high compression through local collapses ($t=2.0$, $H_{\mathrm{mod}}=0.80$). 
    In the \textit{C.~elegans} metabolic network, the relatively homogeneous degree distribution produces a sharper high--low--high pattern: only strongly connected pathways merge at small scales, while at high compression a single dominant macro-node absorbs most of the network. 
    In the Facebook network, bridge nodes preserve harmonicity throughout the RG flow, and the coarse-graining at $t=1.0$ is an exact harmonic morphism; at $t=1.95$ one finds $H_{\mathrm{mod}}=0.99$. For the harmonic curves, $\varepsilon=10^{-3}$ is used in panels (a) and (c), while the single coarse-graining plots are computed with $\varepsilon=0$; in panel (b), $\varepsilon=0$ throughout.}
    \label{fig:si:laplacian_renorm_examples}
\end{figure}
Laplacian renormalization produces the richest variety of harmonic degree behaviors across individual networks, all variations on the high-low-high theme identified in the main text. We present three networks that span this range, together with the Facebook network discussed in the main text.
 
The NetSci collaboration network (Fig.~\ref{fig:si:laplacian_renorm_examples}a) exemplifies the canonical high-low-high pattern with a pronounced intermediate dip. At diffusion time $t = 0.3$ ($H_{\text{mod}} = 0.87$), the coarse-graining consists of small, tightly connected research groups linked to the rest of the network through one or two interface nodes---a topology that naturally satisfies horizontal conformality. As diffusion time increases to $t \simeq 0.5$--$1.0$, the expanding diffusion horizon causes groups to merge asymmetrically: some research communities consolidate faster than others, and the resulting macro-sets develop unbalanced boundary multiplicities ($H_{\text{mod}}$ drops to $0.67$). At very high compression ($t = 2.5$, $H_{\text{mod}} = 0.94$), the network has coalesced into a few large macro-sets, and the coarse-graining increasingly resembles a series of local collapses---peripheral nodes merging into hubs---which are inherently harmonic.

The C.~Elegans metabolic network (Fig.~\ref{fig:si:laplacian_renorm_examples}b) shows a more abrupt version of the high-low-high pattern. At small diffusion times, Laplacian renormalization groups only the most strongly connected metabolic pathways, preserving the random walk structure of the broader network. However, the C.~Elegans metabolic network has a relatively homogeneous degree distribution compared to social or collaboration networks, so diffusion spreads more uniformly as $t$ increases. This produces a rapid transition from many small, well-separated macro-sets to a single dominant macro-node absorbing most of the network, with a corresponding sharp drop in $H_{\text{mod}}$. The moderately high harmonic values even at high compression reflect a general property: when most nodes have been absorbed into a single macro-set, the few remaining boundary nodes are necessarily few relative to the interior, limiting the potential for asymmetric multiplicities.

The Facebook social network (Fig.~\ref{fig:si:laplacian_renorm_examples}c) represents the opposite extreme: $H_{\text{mod}} \approx 1.0$ across the entire renormalization flow, with an exact harmonic morphism at $t = 1.0$. The visualization reveals the mechanism discussed in the main text. Large groups of nodes connected to hubs collapse into macro-nodes, while individual bridge nodes---those with balanced external connectivity---map to themselves. The coexistence of these two populations produces macro-sets with constant boundary multiplicities. The bridge nodes are crucial: they sit at the interfaces between communities and distribute their connections uniformly across neighboring macro-sets, ensuring that the harmonic morphism condition holds globally even as the underlying partition changes with diffusion time.
 
The comparison across networks suggests several patterns. 
The depth of the intermediate minimum in the Laplacian curve appears to correlate with network heterogeneity: networks with strong community structure and heterogeneous degree distributions (\textit{Facebook}, \textit{Web-edu}) show shallow or absent minima because their hub-and-bridge topology better maintains balanced boundary structure. 
Networks with more homogeneous connectivity (\textit{Euroroad}, \textit{C.~Elegans}) show deeper minima because diffusion spreads more uniformly, merging communities at comparable rates and creating competition for boundary nodes. 
Collaboration networks (\textit{NetSci}, \textit{CS Collab}) fall between these extremes, with moderate minima reflecting their intermediate degree heterogeneity.

\subsection{GNN-based renormalization: individual networks}
 
GNN-based renormalization exhibits the least variation across networks of the three methods, consistent with its uniformly low harmonicity signature. Figure~\ref{fig:si:gnn_renormalization_examples} shows 25 independent samples from the soft assignment distribution for the \textit{Weaver} animal social network. 
Two features are characteristic. 
First, the harmonic degree is low across all samples and all compression levels, confirming that the spectrum-preserving objective is systematically incompatible with random walk preservation. Second, there is non-negligible variance between samples at the same nominal compression, reflecting the stochasticity of the soft-to-hard assignment conversion. 
This variance however does not qualitatively alter the fingerprint: no individual sample achieves $H_{\text{mod}}$ values comparable to those from Laplacian renormalization.
 
The near-equality $H_{\text{mod}} \approx CF_{\text{mod}}$ observed in the main text for the \textit{Euro-Road} network persists across networks, including the \textit{Weaver} network. This indicates that whatever harmonicity exists in GNN partitions is driven almost entirely by singleton macro-sets (nodes mapped to themselves), which trivially satisfy both horizontal conformality and combinatorial conformality. 
In contrast, multi-node macro-sets produced by the GNN almost never exhibit balanced boundary multiplicities, because the method distributes nodes based on spectral role similarity rather than local adjacency structure.
 
\begin{figure}[tbp]
    \centering
    \includegraphics[width=0.8\linewidth]{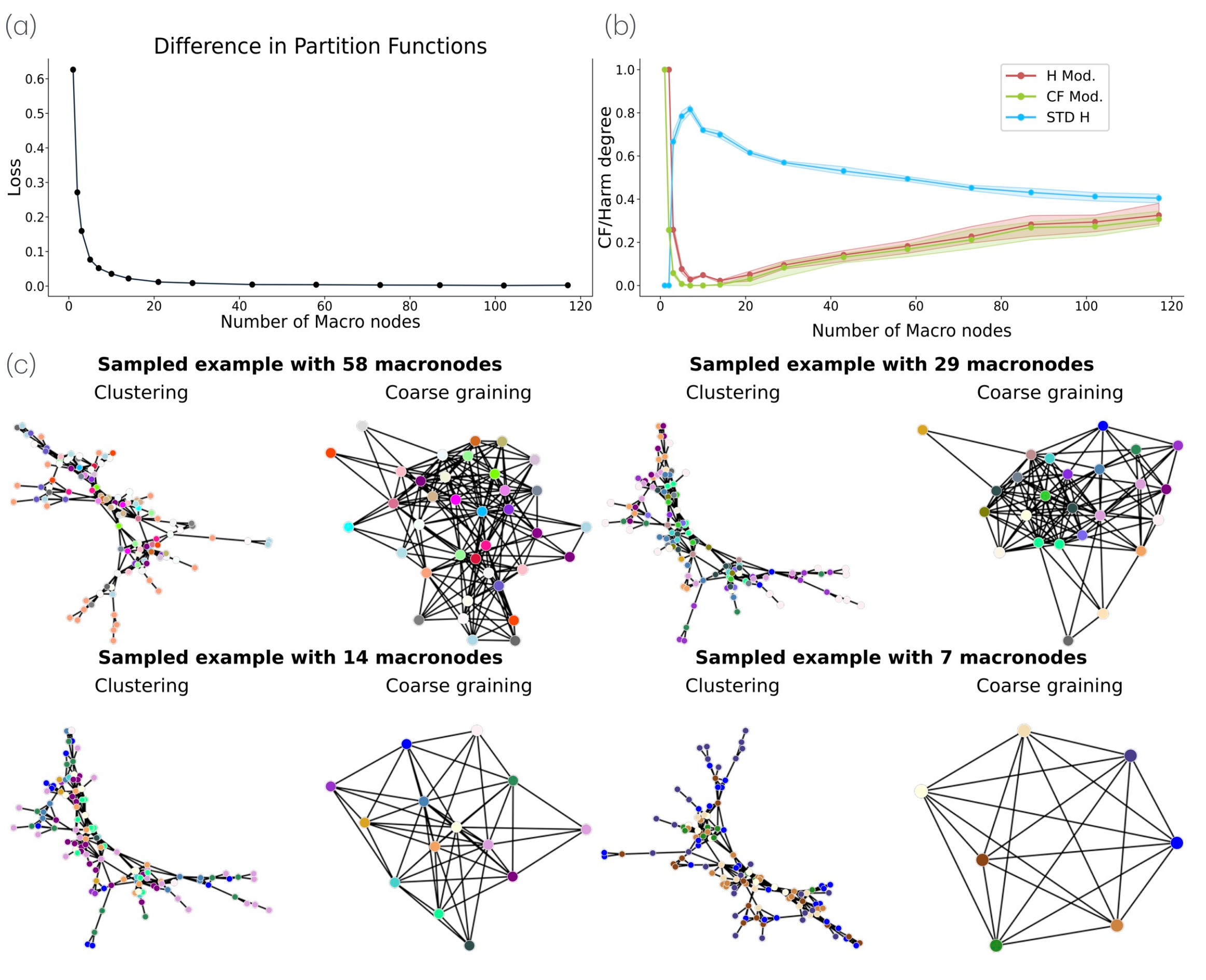}
    \caption{GNN-based renormalization of the \textit{Weaver} social network. 
    (a) Difference in partition functions as a function of the number of macronodes, showing that the sampled partitions rapidly converge toward the same coarse-graining statistics as resolution increases. 
    (b) Harmonicity measures across scales, including modified harmonic degree, cluster-factor modularity, and standard harmonic degree. 
    (c) Representative sampled hard partitions and their associated coarse-grained networks at four compression levels (58, 29, 14, and 7 macronodes). Consistent with the original analysis, harmonicity remains uniformly low across samples, with noticeable inter-sample variability at fixed compression reflecting the stochasticity of the soft-to-hard conversion.}
    \label{fig:si:gnn_renormalization_examples}
\end{figure}

%% ====================================================================
\section{Equilibrium Laplacian renormalization}
\label{sec:eq_laplacian_sm}

Equilibrium Laplacian Renormalization~\cite{yi2025equilibriumpreservinglaplacianrenormalizationgroup} modifies the standard approach to better preserve small eigenmodes and approximate the partition function. The algorithm fixes compression rate $r$, determines the appropriate time $t_r$ via $S(t_r) = \ln((1-r\kappa)N)$, then sorts all node pairs by $\varrho_{ij}\varrho_{ji}/(\varrho_{ii}\varrho_{jj})$ (where $\varrho = e^{-tL}$ is the unnormalized heat kernel) and merges the top pairs until reaching compression $(1-r)N$.

We observe a striking reversal of standard Laplacian behavior: harmonic degree \emph{decreases} at low scales and \emph{increases} at high scales. 
This reversal reflects the global nature of the sorting procedure. Unlike standard Laplacian renormalization (which merges whenever the local threshold is exceeded), Equilibrium Laplacian ranks all pairs globally, then merges top-ranked pairs. This top-down procedure can destroy harmonicity at low scales, but then recovers a giant component toward meso/high scales, which is naturally biased toward high harmonicity.
The conformality is remarkably high ($CF_{\text{mod}} > 0.6$ across most scales), exceeding all other renormalization methods. This occurs because many nodes, left unmerged by the global sorting procedure, are mapped to themselves, trivially satisfying conformality through local identity mapping.

\begin{figure}[tbp]
    \centering
    \includegraphics[width=.8\linewidth]{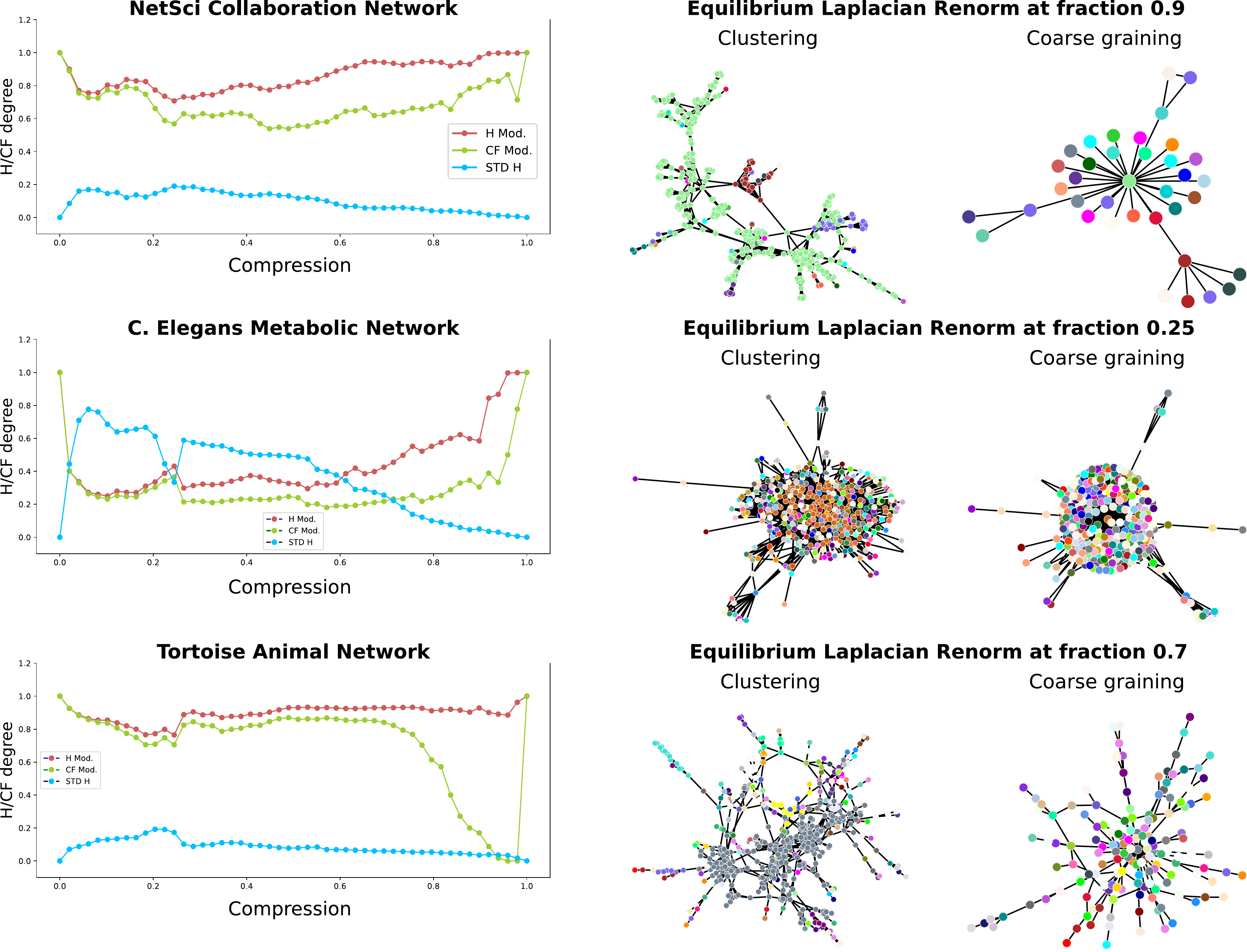}
    \caption{Equilibrium Laplacian Renormalization for several networks: NetSci Collaboration Network, C.~Elegans Metabolic network, and Tortoise Animal Network. At low scales there is a decrease in harmonic degrees, and overall the values are very high. Equilibrium Laplacian Renormalization tends to produce giant components at high scales.}
    \label{fig:si:eq_lap}
\end{figure}

%% ====================================================================
\section{Higher-order extensions}
\label{sec:higher_order_sm}

\subsection{Generalized renormalization scheme for higher-order operators}

To extend Laplacian renormalization to higher-order operators (Hodge, Bochner, $XO$-Laplacians), we introduce a generalized merging procedure that handles operators with non-trivial kernels. Given a Laplacian $L$ on a simplicial complex, its exponential $\varrho = e^{-t L}$ is the heat kernel driving diffusion. The standard merging criterion $\rho_{ij} \geq \min(\rho_{ii}, \rho_{jj})$ is valid when the kernel is spanned by the all-ones vector, as for the graph Laplacian. For generic operators with non-trivial kernels (e.g. Hodge Laplacian of simplicial complexes with non-trivial topology \cite{Muhammad2006Control_Higher_order_Laplacian}), the persistent kernel component would distort the merging criterion.

Our generalized procedure is as follows. For each basis vector $\mathbf{e}_j$: 
\begin{enumerate}
    \item decompose $\mathbf{e}_j = \mathbf{e}_j^{(h)} + \mathbf{e}_j^{(r)}$ into harmonic and residual components by projecting onto the orthogonal complement of the kernel; 
    \item diffuse the residual part: $\varrho_{\cdot,j} = e^{-t L} \mathbf{e}_j^{(r)}$; 
    \item merge nodes $i$ and $j$ if $(|\varrho_{ij}| + |\varrho_{ji}|)/2 \geq \min(|\varrho_{ii}|, |\varrho_{jj}|)$, where absolute values account for possible sign changes in non-symmetric operators.

\end{enumerate}

This procedure commutes with the heat kernel (since $L$ and $e^{-t L}$ share eigenvectors), and reduces to standard Laplacian renormalization for the graph Laplacian and cross-order Laplacians, whose kernels are spanned by the all-ones vector. 
For the Hodge Laplacian, the kernel corresponds to $k$-dimensional topological holes; for the Bochner Laplacian, to balanced configurations of the signed parallel adjacency graph.

\subsection{Node-based metrics for higher-order renormalization}

As a first approach, we evaluated the node-based harmonic degree on 1-skeletons under higher-order renormalization schemes, using the framework of Ref.~\cite{nurisso2025higher} extended to Hodge diffusion. For each network, we consider the standard Laplacian, the multi-order Laplacian~\cite{Lucas_2020}, and the $XO(1,2)$, $XO(2,1)$, 2-Bochner, and 2-Hodge Laplacians. 

Higher-order Laplacians that diffuse on edges or triangles consistently disrupt node-level harmonicity, because the partition that simplices induce on nodes does not respect node-level diffusive locality. By contrast, the multi-order Laplacian, which remains node-based while incorporating higher-order information, achieves performance comparable to standard Laplacian renormalization. This confirms that the incompatibility is between the diffusion domain ($k$-simplices) and the evaluation domain (nodes), not between higher-order structure and random walk preservation.

\begin{figure}[tbp]
    \centering
    \includegraphics[width=\linewidth]{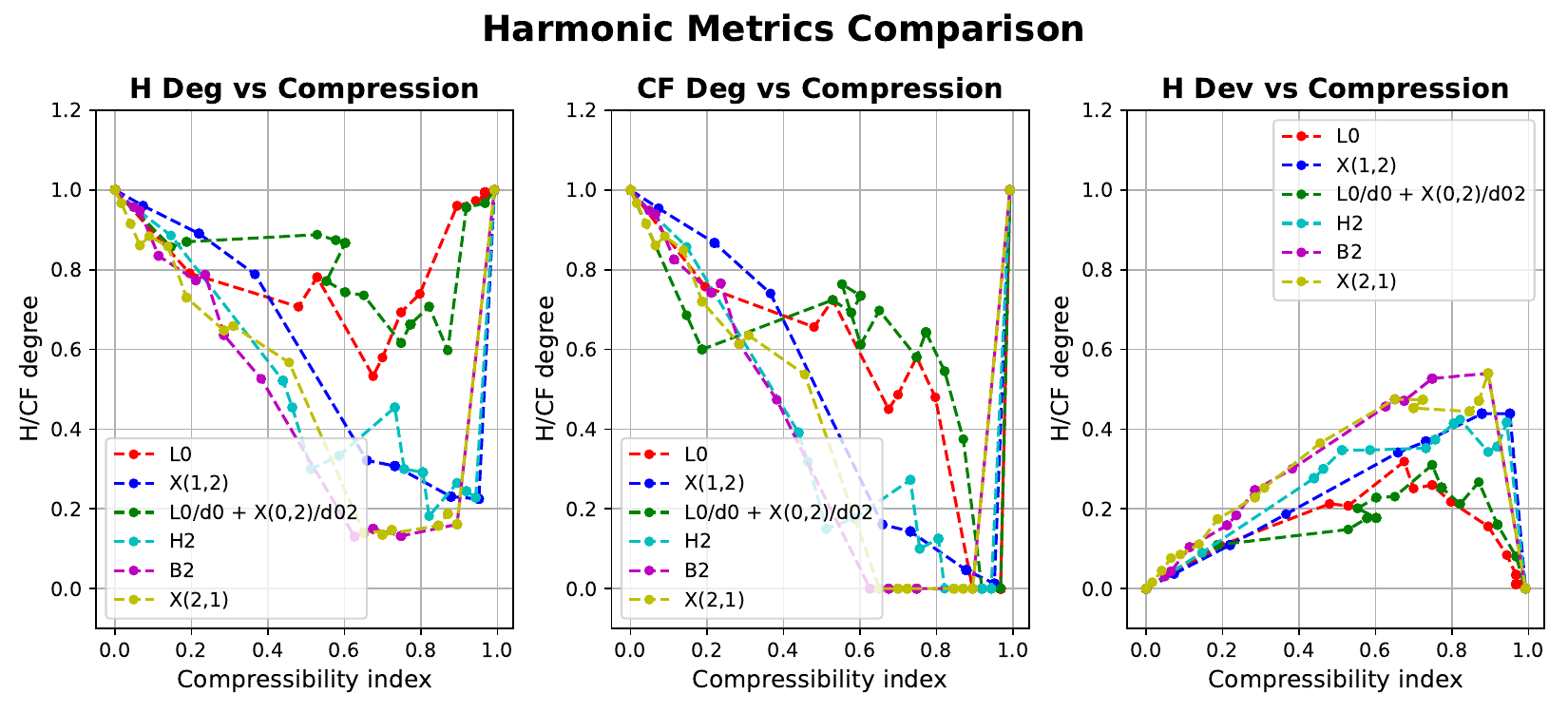}
    \caption{Evaluation of higher-order and node-based renormalization schemes via harmonic degree on the 1-skeletons of pseudofractal complex of dimension 2, built with 4 steps.}
    \label{fig:si:ho_pf2}
\end{figure}

\subsection{Harmonic evaluation on the $k$-order adjacency graph}

To resolve the incompatibility between node-based harmonic degree and higher-order diffusion, we evaluate harmonicity on the graph where the dynamics naturally live: the $k$-order adjacency graph of the simplicial complex. In this graph, vertices represent $k$-simplices and edges connect $k$-simplices sharing a $(k-1)$-face. We then apply the harmonic degree framework of Sec.~\ref*{section:harmonic_degree} to the coarse-graining of this adjacency graph induced by higher-order diffusion.

This approach evaluates the harmonicity of coarse-grainings of the adjacency graph induced by higher-order dynamics, rather than directly assessing coarse-grainings between simplicial complexes. The coarse-grained graphs are pseudo-adjacency graphs: they are not guaranteed to be $k$-skeletons of any simplicial complex.

For Hodge and Bochner diffusion, one can also consider the parallel adjacency graph, defined via the parallel adjacency relation following the Bochner decomposition~\cite{Forman2003BochnersMF}. Since it is unclear which structure---adjacency or parallel adjacency---should be preferred as the ambient space for these dynamics, we analyze both. We note that when the dimension of the simplices is maximal (i.e., equal to the dimension of the complex), the two graphs coincide. This is the case for Fig.~\ref{fig:higher_order}, where the 2-dimensional pseudofractal has no tetrahedra. For the 3-dimensional case (Fig.~\ref{fig:si:pf3d}), the notions do not coincide for 2-diffusion but do for 3-diffusion.

The key finding is that the Hodge Laplacian is sharply sensitive to the topological dimension of the complex:
\begin{itemize}
\item On a 2-dimensional pseudofractal, 2-diffusion via the Hodge Laplacian produces persistent harmonic morphisms (Fig.~\ref{fig:higher_order},~\ref{fig:si:pf2d}), consistently outperforming the Bochner and $XO$ Laplacians. This advantage is attributable to the Forman curvature term~\cite{Forman2003BochnersMF}, the reactive component of Hodge diffusion~\cite{nurisso2025higher}.
\item On a 3-dimensional pseudofractal, 2-diffusion via the Hodge Laplacian completely disrupts harmonicity regardless of whether adjacency or parallel adjacency is used, while 3-diffusion recovers high harmonic degree values (Fig.~\ref{fig:si:pf3d}).
\end{itemize}

\begin{figure}[tbp]
    \centering
    \includegraphics[width= 1\linewidth]{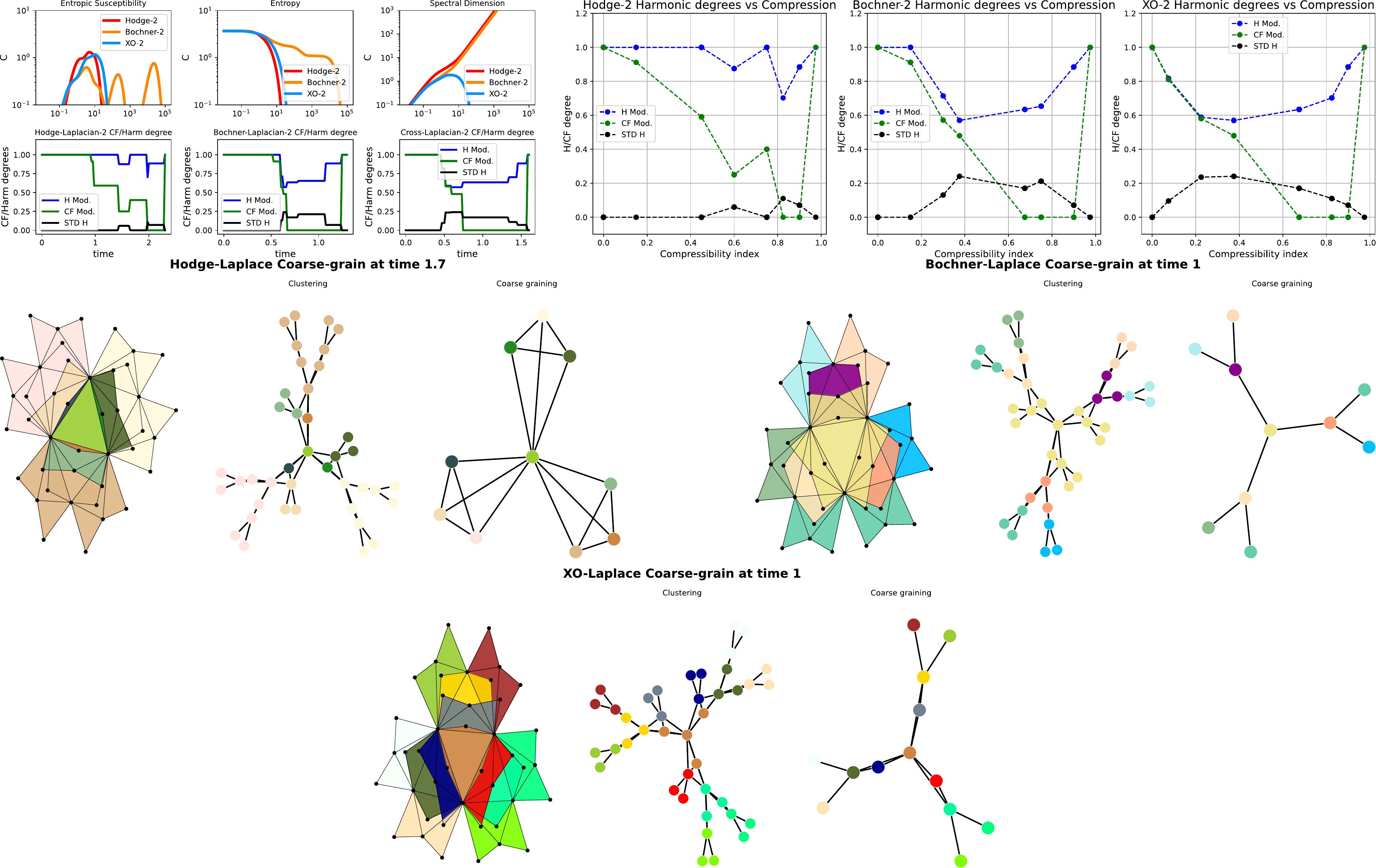}
    \caption{Harmonic degrees of Pseudofractal simplicial complex of dimension $2$. We highlight that in this case there are no $2$-D holes and adjacency coincides with parallel adjacency, since there are no tetrahedra. We can see that the Hodge-Laplacian is the one that preserves most the harmonicity. The Hodge-Laplace transformation shown is indeed a harmonic morphism.}
    \label{fig:si:pf2d}
\end{figure}

\begin{figure}[tbp]
    \centering
    \includegraphics[width=1 \linewidth]{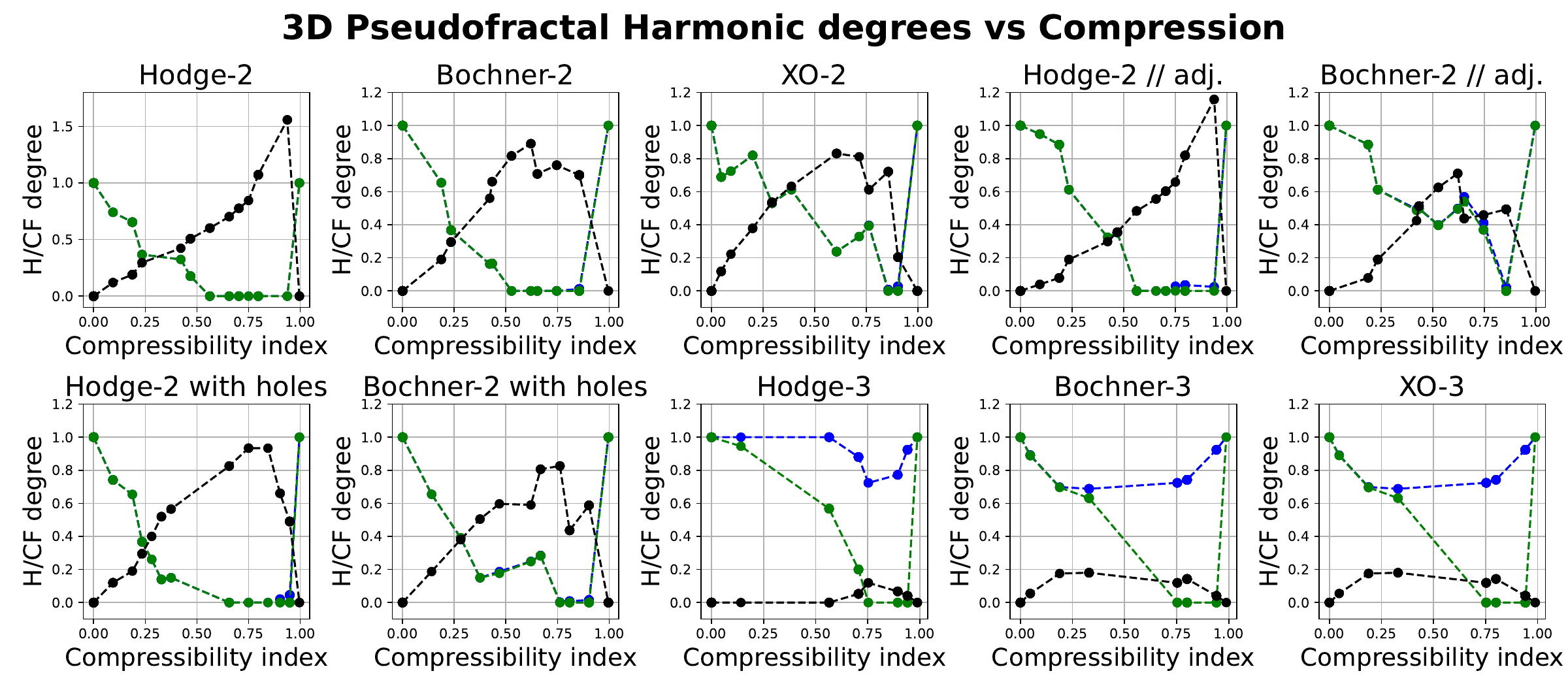}
    \caption{Harmonic degrees vs compression of second and third order operators on a pseudofractal complex of dimension $3$. The 2-order Hodge operator completely disrupts harmonicity in all cases (with or without holes, adjacency or parallel adjacency), while the 3-order Hodge operator yields the highest harmonic degree values. Blue: modified harmonic degree; green: modified conformal degree; black: harmonic deviation.}
    \label{fig:si:pf3d}
\end{figure}

\subsection{Real-world higher-order networks}

We analyze several real-world higher-order networks previously studied in Ref.~\cite{Iacopini2019}: the social/contact networks of a hospital (LH10), high-school (Thiers13), and workplace (InVS15). For each network, we consider only the largest connected component of 2-simplices and analyze it using 2-order diffusion.

For LH10 and InVS15, harmonicities are completely disrupted, as in the 3-dimensional pseudofractal case, suggesting that these networks require higher-order diffusion to capture their full structure. By contrast, Thiers13 shows overall moderate-to-high harmonicity values, with the pattern of high harmonicity at small scales and moderate decrease at intermediate scales observed in graph Laplacian cases (Fig.~\ref{fig:si:thiers13}). This suggests that the Thiers13 network can be meaningfully described in terms of its 2-simplex (triplet) interactions.

\begin{figure}[tbp]
    \centering
    \includegraphics[width=1\linewidth]{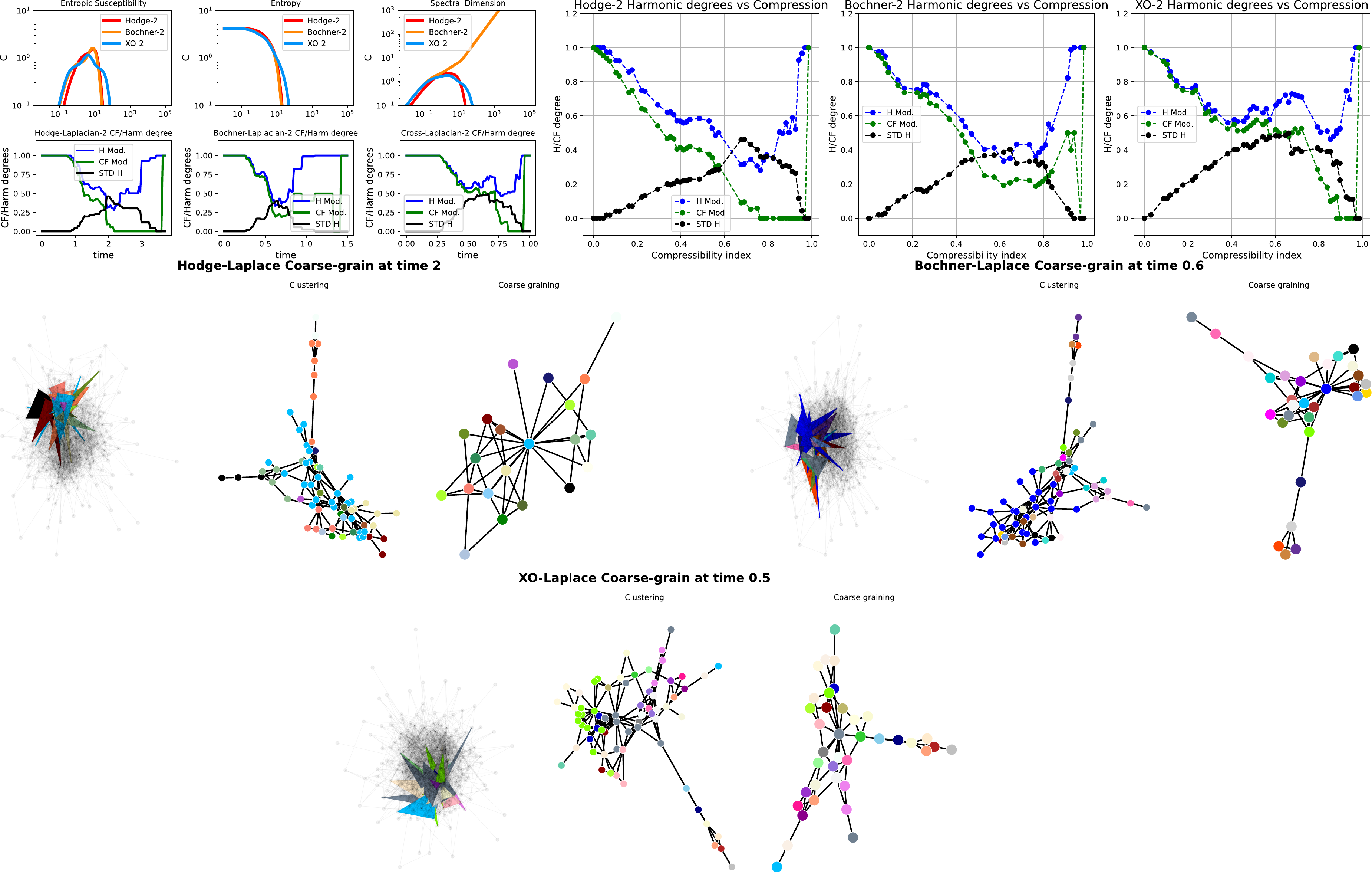}
    \caption{Harmonic degrees of the Thiers13 simplicial complex under 2-order operators. At small scales harmonicity is moderately high, with a decrease at intermediate scales. The $XO$-Laplacian preserves harmonicity best in this case.}
    \label{fig:si:thiers13}
\end{figure}

\subsection{Higher-order harmonic morphisms}

Let $A$ be an operator over the space of functions defined on the collection $S_k$ of $k$-simplices of a simplicial complex. A function is locally $A$-harmonic at simplex $\sigma$ if $Af(\sigma) = 0$.

Given simplicial complexes $S$ and $\mathcal{S}$ with respective $k$-simplex collections $S_k$ and $\mathcal{S}_k$, a function $\varphi_k: S_k \to \mathcal{S}_k$ is a \emph{Hodge-harmonic morphism of degree $k$} if: for each $\sigma' \in \mathcal{S}_k$ and for each function $f$ that is locally Hodge-harmonic at $\sigma'$, the composition $f \circ \varphi_k$ is locally Hodge-harmonic at $\sigma$ for all $\sigma \in \varphi_k^{-1}(\sigma')$. Analogously, one defines \emph{Bochner-harmonic morphisms} by replacing Hodge-harmonicity with Bochner-harmonicity.

Concrete examples of Hodge-harmonic morphisms include the surjection from an infinite 4-colored tessellation of 2-simplices arranged on a honeycomb lattice onto an empty tetrahedron, and the merging of opposite faces of an octahedron into an empty tetrahedron.

However, both Hodge-harmonic and Bochner-harmonic morphisms are orientation-dependent, and it is unclear whether they admit a higher-order analog of horizontal conformality. Since orientations do not arise naturally in our data, we do not pursue these definitions empirically and leave deeper investigation to future work. We note that cross-order harmonic morphisms, by contrast, coincide with standard harmonic morphisms of the corresponding $k$-order adjacency graphs.

%% ====================================================================
\section{Harmonic morphism optimization}
\label{sec:optimization_sm}

Finding partitions that are harmonic morphisms on graphs is a completely unexplored question, related to graph gonality~\cite{Caporaso_Gonality_2}---the existence of harmonic morphisms from $G$ to trees connects to deep questions in algebraic geometry.

We formulate three optimization problems:
\begin{align}
&\max_{p} H_{\text{mean}}(p), \\
&\max_{p} H_{\text{mod}}(p), \\
&\max_{p} (-H_{\text{Dev}}(p)),
\end{align}
where $p$ is a partition of $V$. For meaningful results, we require $|p| \geq 3$; with $|p|=2$, every partition is trivially a harmonic morphism.

We introduce \textbf{harmonic-corrected modularity}:
\begin{equation}
\max_p \left[ \text{MOD}(p) + \lambda \cdot HM(p) \right],
\end{equation}
where $\text{MOD}(p)$ is Newman's modularity~\cite{Modularity_Newman_2006}, $HM(p) = -H_{\text{Dev}}$, and $\lambda > 0$ balances assortativity and harmonicity. The rationale: greedy modularity often produces high harmonic values (Section~\ref{sec:clustering_sm}), suggesting modularity-driven partitions are partly compatible with random walk structure. The harmonic regularization term refines these partitions toward better dynamical preservation.

We implement a greedy algorithm (code available on the code repository). The algorithm performs well on small graphs (up to $\sim 60$ nodes) but encounters local optima on larger networks. We also attempted machine learning approaches using the matrix criterion from Ref.~\cite{Melles2024} and adapting the GNN framework~\cite{Coarse_graining_network_De_Domenico} with loss $\|L_1\Phi - \Phi L_2\|^2$, where $\Phi \in \mathbb{R}^{|V| \times |\mathcal{V}|}$ is the grouping matrix. When the loss vanishes, $\Phi L_2 = L_1 \Phi$, implying $\varphi$ is harmonic. However, these approaches performed poorly, likely because soft-to-hard assignment conversion disrupts optimization.

%% ====================================================================
\section{Connection to conformal field theory}
\label{si:link-to-conformal-field-theory}

In two-dimensional conformal field theory~\cite{Introduction_to_Conformal_Field_Theory}, conformal invariance constrains correlation functions and operator algebras, leading to exact solutions of critical systems. For discrete lattice models, establishing conformal invariance in the scaling limit~\cite{Conformal_Invariance_Lattice_Models_Smirnov,Discrete_Complex_Analysis} has been a major achievement, explaining universal properties of critical phenomena. 

Our framework suggests that harmonic morphisms provide analogous structure for networks: just as conformal maps connect different coordinate systems for the same physical spacetime, harmonic morphisms connect different resolutions of the same underlying diffusion process. It is worth noting that the standard block-grid transformations used in the Kadanoff renormalization group do not constitute harmonic morphisms. Nevertheless, given the deep connections between harmonic morphisms and complex algebraic geometry, these transformations are natural candidates for further developing the ideas of discrete complex analysis~\cite{Discrete_Complex_Analysis}.

Importantly, our definitions are purely combinatorial, avoiding edge weights or geometric embeddings~\cite{conformallycovariantoperators,Bobenko_Conformal_2015,Conformal_Squid}. This abstraction comes at a cost---we cannot visualize angle preservation---but provides a benefit: the framework applies to any network topology. The price of generality is that we measure conformality through multiplicities (counts of neighbors in each macro-set) rather than geometric quantities.

The analogy extends to renormalization. In lattice field theories, RG transformations preserving conformal symmetry identify fixed points and universal scaling behavior. For networks, Theorem~\ref*{theo:RW_preservation} suggests harmonic morphisms preserve the ``universal'' property of diffusion---the probabilistic structure governing how random walkers move between macro-sets. Networks admitting sequences of harmonic morphisms across scales would exhibit a form of diffusive self-similarity, analogous to scale invariance in critical phenomena.

\clearpage

\putbib[biblio]
\end{bibunit}

\end{document}